\newtheorem{thm}{Theorem}
\newtheorem{lem}[thm]{Lemma}
\newtheorem{rem}[thm]{Remark}
\begin{document}
\renewcommand{\algorithmicrequire}{\textbf{Input:}} 
\renewcommand{\algorithmicensure}{\textbf{Output:}}
\title{Toward Wireless Localization Using Multiple Reconfigurable Intelligent Surfaces\\

\thanks{This work is supported by the National Natural Science Foundation of China under Grant No.12141107, and the Interdisciplinary Research Program of HUST (2023JCYJ012). (Corresponding author: Tiebin Mi)}
}

\author{Fuhai Wang,~\IEEEmembership{Student Member,~IEEE,}
Tiebin Mi,~\IEEEmembership{Member,~IEEE,}
Chun Wang, 
Rujing Xiong,~\IEEEmembership{Student Member,~IEEE,}
Zhengyu Wang,~\IEEEmembership{Student Member,~IEEE,} 
and
Robert Caiming Qiu,~\IEEEmembership{Fellow,~IEEE} 
\thanks{Fuhai Wang is with the School of Electronic Information and Communications, Huazhong University of Science and Technology, Wuhan 430074, China, and also with the Institute of Artificial Intelligence, Huazhong University of Science and Technology, Wuhan 430074, China (e-mail: wangfuhai@hust.edu.cn).}
\thanks{Tiebin Mi, Chun Wang, Rujing Xiong, Zhengyu Wang and Robert Caiming Qiu are with the School of Electronic Information and Communications, Huazhong University of Science and Technology, Wuhan 430074, China (e-mail: mitiebin@hust.edu.cn; m202272439@hust.edu.cn; rujing@hust.edu.cn; wangzhengyu@hust.edu.cn; caiming@hust.edu.cn).}
} 

\maketitle
\begin{abstract}
This paper investigates the capabilities and effectiveness of backward sensing centered on reconfigurable intelligent surfaces (RISs). We demonstrate that the direction of arrival (DoA) estimation of incident waves in the far-field regime can be accomplished using a single RIS by leveraging configurational diversity. Furthermore, we identify that the spatial diversity achieved through deploying multiple RISs enables accurate localization of multiple power sources. Physically accurate and mathematically concise models are introduced to characterize forward signal aggregations via RISs. By employing linearized approximations inherent in the far-field region, the measurement process for various configurations can be expressed as a system of linear equations. The mathematical essence of backward sensing lies in solving this system. A theoretical framework for determining key performance indicators is established through condition number analysis of the sensing operators. In the context of localization using multiple RISs, we examine relationships among the rank of sensing operators, the size of the region of interest (RoI), and the number of elements and measurements. For DoA estimations, we provide an upper bound for the relative error of the least squares reconstruction algorithm. These quantitative analyses offer essential insights for system design and optimization. Numerical experiments validate our findings. To demonstrate the practicality of our proposed RIS-centric sensing approach, we develop a proof-of-concept prototype using universal software radio peripherals (USRP) and employ a magnitude-only reconstruction algorithm tailored for this system. To our knowledge, this represents the first trial of its kind.
\end{abstract}

\begin{IEEEkeywords}
Reconfigurable Intelligent Surface (RIS), forward signal aggregations, backward sensing, localization, direction of arrival (DoA). 
\end{IEEEkeywords}


\section{Introduction} 

\IEEEPARstart{R}econfigurable intelligent surfaces (RISs) have received considerable attention due to their potential to enhance both the communication and sensing capabilities of wireless networks \cite{du2024nested,tang2020mimo,wu2019intelligent,basar2019wireless,song2023intelligent,zhong2023joint}. RISs comprise cost-effective, well-designed electromagnetic (EM) units, each capable of independently modifying the characteristics of incident EM waves. These units enable the control of the scattering and reflection properties of EM waves through software-based manipulation. Although current research extensively explores their applications in forward coverage enhancement under line-of-sight/non line-of-sight (LoS/NLoS) conditions \cite{WOS,xiong2024optimal,ma2022cooperative,chu2021intelligent,ouyang2023computer,cheng2022reconfigurable,wu2019intelligent,ma2022modeling}, backward sensing through RISs is also attracting increasing attention \cite{zhang2021metalocalization,lin2021single,he2022high,shao2022target}. A key advantage of RISs in sensing applications is their reconfigurability of the electromagnetic environment, which distinguishes them from other competitors~\cite{moghadasian2019sparsely,ahmed2021uwb,zhao2023tdloc} in the design of sensing signals with temporal diversity or frequency diversity.

The diversity schemes introduced by RISs are twofold: configurational diversity~\cite{lan2020wireless,alexandropoulos2023ris,zhao2023intelligent}, enabled by their reconfigurability, and spatial diversity, achieved through deploying multiple RISs. Current research primarily focuses on exploring the configurational diversity of an individual RIS. In these studies, the RIS is often regarded as a promising auxiliary technology for enhancing wireless systems by producing a sharp point spread function \cite{jiang2024near,buzzi2021radar,wymeersch2020radio,li2023metaphys,jiang2022reconfigurable}, which facilitates the generation of an unblurred image or clear map (e.g., \cite{keykhosravi2023leveraging,lin2021single}). Notably, RISs themselves are well-suited for sensing tasks due to their nature as large and reconfigurable antenna arrays. In general, RISs can generate a series of high-dimensional observations for regions of interest (RoIs)~\cite{wymeersch2022radio}, which greatly benefit sensing applications~\cite{mehrotra2022does}. Nevertheless, the complete functionality of RISs in this context remains unclear. It is crucial to assess the feasibility and capabilities of the RIS-centric backward sensing approach.

A fundamental function of sensing is localization, which involves extracting location-related information such as the direction of arrival (DoA), also known as the angle of arrival (AoA)~\cite{gezici2008survey,guvenc2009survey}. However, there are only a limited number of articles addressing DoA estimation through RISs \cite{rinchi2022compressive,lin2021single,zhou2022two}. To the best of our knowledge, most research is based on numerical simulations, and there are no prototype demonstrations of this capability. One reason is that accurate physics-based models of RISs in backward sensing are not yet well-developed. In this paper, we first demonstrate that DoA estimation of incident waves in the far-field can be performed using a single RIS, leveraging configurational diversity.

Even with successful DoA estimation using RISs, accurate localization remains challenging~\cite{sayed2005network}. One approach involves deploying multiple RISs~\cite{alexandropoulos2022localization,keykhosravi2023leveraging} to enhance the spatial diversity. A DoA parameter defines a straight line passing through the power source and the RIS, meaning that two DoA parameters are sufficient to locate a single source via triangulation, as shown in Fig.~\ref{F:Discretization-1:1}. However, when multiple sources radiate signals simultaneously, ambiguity arises, as illustrated in Fig.~\ref{F:Discretization-1:2}. To resolve this uncertainty, additional RISs positioned at different angles can be used. This approach is analogous to computed tomography, which reconstructs a field based on dense angular samplings \cite{wilson2010radio,karl2023foundations}. However, the cost of such samplings via RISs is impractical. Fully utilizing the spatial diversity provided by deploying multiple RISs remains an open question. Addressing these challenges in RIS-centric backward sensing requires a paradigm shift.

\begin{figure}[!htbp]
  \centering
  \subfigure[]{
  \label{F:Discretization-1:1}
  \includegraphics[width=.8\columnwidth]{}}
  \subfigure[]{
  \label{F:Discretization-1:2}
  \includegraphics[width=.8\columnwidth]{}}
  \caption{Illustration of the RIS-centric sensing system via triangulation: (a) A single source can be localized using two DoA parameters. (b) Ambiguity arises when multiple sources radiate signals simultaneously.}
  \label{F:Discretization-1} 
\end{figure}

\subsection{Contributions}
The main contributions are summarized as follows:

\begin{itemize}

\item \textbf{Physics-based forward signal aggregations and mathematical formulations for backward sensing:} 
Physically accurate and mathematically concise models are presented to characterize the reflection properties of RISs. By employing linearized approximations inherent in the far-field region, we provide practical models to describe the forward signal aggregation process via RISs. Additionally, we rewrite the measurements of the RoI for various configurations as a system of linear equations. This formulation offers valuable insights into the capabilities of the proposed RIS-centric sensing approach. Mathematically, the problem of backward sensing can be formulated as finding a reasonable solution to this system.

\item \textbf{Key indicators dominating the performance of backward sensing through RISs:} 
A theoretical framework to identify key performance indicators that dominate the performance of backward sensing is established through condition number analysis of the sensing operators. In the context of localization using multiple RISs, we investigate the relationships among the rank of sensing operators, the size of the RoI, and the number of elements and measurements. A necessary condition for high-fidelity recovery is that the dimension of the RoI is less than the number of elements and measurements. For DoA estimations, we provide an upper bound for the relative error of the least squares reconstruction algorithm. These quantitative analyses offer crucial insights for system design and optimization.

\item \textbf{Both numerical simulations and proof-of-concept prototypes demonstrate the feasibility of the proposed approach:}
We validate the capabilities and effectiveness of the proposed RIS-centric backward sensing approach through extensive experiments. Numerical trials illustrate the impact of several factors, including the number of elements, measurements, and RISs, as well as their positions. To demonstrate the practicality of our approach, we develop a proof-of-concept prototype using universal software radio peripherals (USRP) and employ a magnitude-only reconstruction algorithm tailored to this system. To the best of our knowledge, no previous prototypes have been proposed with such configurations. 

\end{itemize}

\subsection{Outline}
The remainder of the paper is organized as follows. In Section~\ref{S:ForwardAggregation}, we present forward signal aggregation models using the physical optics method. Section~\ref{S:BackwardSensing} addresses the RIS-centric backward sensing problem. Section~\ref{S:KeyIndicators} discusses key indicators dominating the performance of backward sensing. We validate the findings through extensive numerical experiments in Section~\ref{S:Experiments}. To demonstrate the practicality, we develop a proof-of-concept prototype using universal software radio peripherals (USRP) in Section~\ref{S:POC}. Finally, the paper is concluded in Section~\ref{S:Conclusion}.

\subsection{Reproducible Research}

Once the paper is accepted, we will release the code.

\subsection{Notations}
Unless explicitly specified, bold capital letters and bold small letters denote matrices and vectors, respectively. The conjugate transpose, transpose, and pseudo-inverse of $\mathbf{A}$ are denoted by~$\mathbf{A}^{*}$,~$\mathbf{A}^\top$ and~$\mathbf{A}^{\dag}$, respectively. We denote by $\| \cdot \|$ the Euclidean norm of a vector and the Frobenius norm of a matrix. The matrix $\text{diag}(\mathbf{a})$ represents a diagonal matrix with diagonal elements given by $\mathbf{a}$. Additionally, rank\{$\mathbf{A}$\} denotes the rank of the matrix $\mathbf{A}$.

\section{Forward Signal Aggregations via A Single RIS}\label{S:ForwardAggregation}

In the realm of RIS-centric inverse or backward sensing, a deep understanding of the forward measurement process is essential. Mathematically, this process is represented by the relation $S_t = H_t(\mathbf{E})$, where the scalar $S_t$ denotes the observation, the vector $\mathbf{E}$ represents the scene under investigation, and $H_t$ is the sensing operator mapping the scene to the observation. Here, we employ $t$ to index measurements corresponding to various configurations. For simplicity, we set the configuration to random sampling mode. Generally, RISs with random phase configurations are referred to as statistically isotropic RISs. While this configuration may not fully utilize the available degrees of freedom, it serves as a widely accepted benchmark for evaluating and comparing the performance of other RIS configurations \cite{mi2023towards,jiang2024near,sleasman2016microwave}.

\subsection{Linear Aggregations of Multiple Incident Signals}

We now examine the linear aggregation properties of RISs. Our analysis begins by concentrating on a single RIS situated in the $xoy$-plane, composed of multiple units located at $\mathbf{p}_n = [x_n, \ y_n, \ z_n ]^{\top}$, $n=1, \ldots, N$. For simplicity, we assume that the scattering pattern of an isolated unit is denoted by $\tau(\theta^\text{s}, \phi^\text{s}; \theta^\text{i}, \phi^\text{i})$. This notation signifies the dependence on both the incident angle $(\theta^\text{i}, \phi^\text{i})$ and the scattered angle $(\theta^\text{s}, \phi^\text{s})$. We will not delve into the specific form of $\tau(\cdot)$, as our goal is to illustrate the linearity of RISs. Using the physical optics method, we can calculate the bistatic scattered field of a rectangular metallic patch with near-zero thickness, as detailed in \cite{mi2023towards}. 

Let us consider an example illustrated in Fig.~\ref{F:ArrayGeometry}. The RIS is illuminated by a single incident EM wave originating from $(r^\text{i}, \theta^\text{i}, \phi^\text{i})$. Suppose the incident electric field arriving at $\mathbf{p}_n$ along $(\theta^\text{i} (n), \phi^\text{i} (n))$ is denoted by $E^\text{i} (n) $. Consequently, the scattered filed at $(\theta^\text{s} (n), \phi^\text{s} (n) )$ is represented as $E^\text{s} (n) = \tau ( \theta^\text{s} (n), \phi^\text{s} (n) ; \theta^\text{i} (n), \phi^\text{i} (n) ) E^\text{i} (n)$. In the case of a point source, radiation propagates radially. As the EM wave propagates toward the unit at $\mathbf{p}_n$, the attenuation behavior is characterized by the factor $r^\text{i} (n) e^{ j 2 \pi r^\text{i} (n) / \lambda }$. Therefore, $E^\text{i} (n) = E ( r^\text{i}, \theta^\text{i}, \phi^\text{i} ) e^{ - j 2 \pi r^\text{i} (n) / \lambda}/ r^\text{i} (n)$. Similarly, if $E^\text{s} (n)$ denotes the scattered electric field of the unit at $\mathbf{p}_n$, then the electric field at the observation point is $S_n = E^\text{s} (n) e^{ - j 2 \pi r^\text{s} (n) / \lambda } / r^\text{s} (n)$. 

\begin{figure}[!htbp]
  \centering
  \includegraphics[width=0.92\linewidth]{}
  \caption{An RIS illuminated by a single incident EM wave from $(r^\text{i}, \theta^\text{i}, \phi^\text{i})$. The scattered waves from various elements are aggregated at $(r^\text{s}, \theta^\text{s}, \phi^\text{s})$.}
  \label{F:ArrayGeometry}
\end{figure}

Combining the three components along the propagation trace through the $n$-th unit at $\mathbf{p}_n$, the electric field at the observation point is expressed as 
\begin{multline}\label{SISOBehavior}
  S_n ( r^\text{s}_n, \theta^\text{s}, \phi^\text{s} ) = \tau_n ( \theta^\text{s} (n), \phi^\text{s} (n) ; \theta^\text{i} (n), \phi^\text{i} (n) ) e^{j \Omega_n} \\
  \frac{ e^{ - j 2 \pi r^\text{i} (n) / \lambda} e^{ - j 2 \pi r^\text{s} (n) / \lambda } }{ r^\text{i} (n) r^\text{s} (n) } E ( r^\text{i}, \theta^\text{i}, \phi^\text{i} ) . 
\end{multline}
Here, $\Omega_n$ represents the phase configuration of the $n$-th unit within the RIS. The aggregated electric field at the observation point is determined by the superposition of individual fields scattered by $N$ units, given by
\begin{multline}\label{Model_SISO}
  S ( r^\text{s}, \theta^\text{s}, \phi^\text{s} ) = E ( r^\text{i}, \theta^\text{i}, \phi^\text{i} ) \sum_{n=1}^{N} \tau_n ( \theta^\text{s} (n), \phi^\text{s} (n) ; \theta^\text{i} (n), \phi^\text{i} (n) ) \\ 
  e^{j \Omega_n} \frac{ e^{ - j 2 \pi r^\text{i} (n) / \lambda} e^{ - j 2 \pi r^\text{s} (n) / \lambda } }{ r^\text{i} (n)  r^\text{s} (n) }   . 
\end{multline}
When the RIS is exposed to multiple incident waves, the aggregated electric field is the superposition of the individual fields
\begin{equation}\label{Model_MISO}
\begin{aligned}
  S ( r^\text{s}, \theta^\text{s}, \phi^\text{s} ) = 
  & \sum_{m=1}^{M} E ( r^\text{i}_m, \theta^\text{i}_m, \phi^\text{i}_m )  \\
  & \sum_{n=1}^{N} \tau ( \theta^\text{s} (n), \phi^\text{s} (n) ; \theta^\text{i}_m (n), \phi^\text{i}_m (n) ) \\ 
  & e^{j \Omega_n} \frac{ e^{ - j 2 \pi r^\text{i}_m (n) / \lambda} e^{ - j 2 \pi r^\text{s} (n) / \lambda } }{ r^\text{i}_m (n)  r^\text{s} (n) }   . 
\end{aligned}
\end{equation}

\begin{figure*}[!htbp]
  \begin{equation}\label{E:MIMO}
    \begin{aligned}
        S (r^\text{s}, \theta^\text{s}, \phi^\text{s} ) 
      \approx & 
      \tau \frac{ e^{-j 2 \pi r^\text{s} / \lambda }}{ r^\text{s} } 
      \begin{bmatrix}
        e^{ j 2 \pi \mathbf{p}_1^{\top} \mathbf{u} (\theta^\text{s}, \phi^\text{s})  / \lambda } & \cdots & e^{ j 2 \pi \mathbf{p}_N^\top \mathbf{u} (\theta^\text{s}, \phi^\text{s})  / \lambda }  
      \end{bmatrix} 
      \begin{bmatrix}
        e^{j \Omega_1} &        &  0 \\
                               & \ddots &    \\
            0                  &        & e^{j \Omega_N}
      \end{bmatrix} \\
      & \begin{bmatrix}
        e^{ j 2 \pi \mathbf{p}_1^\top \mathbf{u} (\theta_1^{\text{i}}, \phi_1^\text{i})  / \lambda }  & \cdots  & e^{ j 2 \pi \mathbf{p}_1^\top \mathbf{u} (\theta_M^\text{i}, \phi_M^\text{i})  / \lambda } \\
        \vdots  & \ddots & \vdots \\
        e^{ j 2 \pi \mathbf{p}_N^\top \mathbf{u} (\theta_1^{\text{i}}, \phi_1^\text{i})  / \lambda }  & \cdots  & e^{ j 2 \pi \mathbf{p}_N^\top \mathbf{u} (\theta_M^\text{i}, \phi_M^\text{i})  / \lambda }\\
      \end{bmatrix}
      \begin{bmatrix}
        \frac{ e^{-j 2 \pi r^\text{i}_1 / \lambda } }{ r^\text{i}_1 } &        &  0 \\
                               & \ddots &    \\
            0                  &        &  \frac{ e^{-j 2 \pi r^\text{i}_M / \lambda } }{ r^\text{i}_M }
      \end{bmatrix}
      \begin{bmatrix}
        E (r_1^\text{i}, \theta_{1}^\text{i}, \phi_{1}^\text{i}) \\
        \vdots    \\
        E (r_M^\text{i}, \theta_{M}^\text{i}, \phi_{M}^\text{i})
      \end{bmatrix}.
    \end{aligned}
  \end{equation}
  \medskip
  \hrule
\end{figure*}

The input-output behavior expression can be simplified under specific scenarios, particularly when the RIS comprises isotropic units. In isotropic scattering, incident EM waves uniformly scatter in all directions over the hemisphere of reflection, as depicted in Fig.~\ref{Isotropic}, irrespective of the angle of incidence or observation. The hypothetical isotropic element is sometimes referred to as Lambertian scatterers~\cite{maitre2013processing}. Consequently, the unit's scattering pattern simplifies to $\tau_n ( \theta^\text{s}, \phi^\text{s}; \theta^\text{i}, \phi^\text{i} ) = \tau$. The aggregated electric field is given as
\begin{equation}\label{Model_MISO2}
  \begin{aligned}
    S ( r^\text{s}, \theta^\text{s}, \phi^\text{s} ) = 
    & \sum_{m=1}^{M} E ( r^\text{i}_m, \theta^\text{i}_m, \phi^\text{i}_m )  \\
    & \sum_{n=1}^{N} \tau  e^{j \Omega_n} \frac{ e^{ - j 2 \pi r^\text{i}_m (n) / \lambda} e^{ - j 2 \pi r^\text{s} (n) / \lambda } }{ r^\text{i}_m (n)  r^\text{s} (n) }   . 
  \end{aligned}
\end{equation}

\begin{figure}[!htbp]
  \centering
  \includegraphics[width=0.6\linewidth]{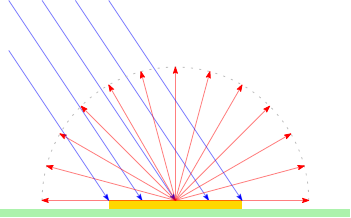}
  \caption{Lambertian scattering~\cite{maitre2013processing}. Incident EM waves uniformly scatter in all directions over the hemisphere of reflection.}
  \label{Isotropic}
\end{figure}

\begin{rem}
  In \eqref{Model_MISO} and \eqref{Model_MISO2}, we illustrate the linearization approximations inherent in the functionality of the RIS. Notably, we do not impose any specific topology of RISs or far-field regime restrictions during the derivation, which allows for a more general description of their behavior. Understanding the linear aggregation of forward incident signals offers valuable insights into the capabilities of RISs, laying a foundation for deploying multiple RISs in sensing applications.
\end{rem}

In the far-field regime, the distances $r^\text{s} (n)$ and $r^\text{i} (n)$ exhibit linear relationships with $r^\text{s}$ and $r^\text{i}$, respectively. This enables concise representations of the linear aggregations in \eqref{Model_MISO} and \eqref{Model_MISO2}. Following the approach outlined in~\cite{mi2023towards}, we can easily compute the path length differences $r^\text{s} (n) - r^\text{s}$ and $r^\text{i} (n) - r^\text{i}$. When analyzing uniform planar arrays, which represent the fundamental topology of RISs, we can employ the canonical linear representation \eqref{E:MIMO} to describe the input/output behaviors. In this context, we define $\mathbf{u} ( \theta, \phi) = [\sin \theta \cos \phi, \ \sin \theta \sin \phi, \ \cos \theta]^\top$.

To shorten notations, we let
\begin{align*}
    l (r^\text{s}) & = \frac{ e^{-j 2 \pi r^\text{s} / \lambda }}{ r^\text{s} } , \\
    \mathbf{E} ( \mathbf{r}^\text{i}, \mathbf{\Theta}^\text{i}, \mathbf{\Phi}^\text{i} ) & = \left[ E (r_1^\text{i}, \theta_{1}^\text{i}, \phi_{1}^\text{i}), \cdots, E (r_M^\text{i}, \theta_{M}^\text{i}, \phi_{M}^\text{i}) \right]^\top, \\
    l (\mathbf{r}^\text{i}) & = \left[ \frac{ e^{-j 2 \pi r_1^\text{i} / \lambda }}{ r_1^\text{i} }, \cdots, \frac{ e^{-j 2 \pi r_M^\text{i} / \lambda }}{ r_M^\text{i} } \right] , \\
    e^{j \mathbf{\Omega} } & = \left[  e^{j \Omega_1}, \cdots,  e^{j \Omega_N} \right] , \\
    \mathbf{v} ( \theta^\text{s}, \phi^\text{s} ) & = \left[ e^{ j 2 \pi \mathbf{p}_1^{\top} \mathbf{u} (\theta^\text{s}, \phi^\text{s})  / \lambda }, \cdots , e^{ j 2 \pi \mathbf{p}_N^\top \mathbf{u} (\theta^\text{s}, \phi^\text{s})  / \lambda } \right] , \\
    \mathbf{V} ( \mathbf{\Theta}^\text{i}, \mathbf{\Phi}^\text{i} ) & = 
    \begin{bmatrix}
      e^{ j 2 \pi \mathbf{p}_1^\top \mathbf{u} (\theta_1^\text{i}, \phi_1^\text{i})  / \lambda }  & \cdots  & e^{ j 2 \pi \mathbf{p}_1^\top \mathbf{u} (\theta_M^\text{i}, \phi_M^\text{i})  / \lambda } \\
      \vdots  & \ddots & \vdots \\
      e^{ j 2 \pi \mathbf{p}_N^\top \mathbf{u} (\theta_1^{\text{i}}, \phi_1^\text{i})  / \lambda }  & \cdots  & e^{ j 2 \pi \mathbf{p}_N^\top \mathbf{u} (\theta_M^\text{i}, \phi_M^\text{i})  / \lambda }\\
    \end{bmatrix} .
\end{align*}
Note that $\mathbf{v} ( \theta^\text{s}, \phi^\text{s} ) \in \mathbb{C}^{1 \times N}$ and $\mathbf{V} ( \mathbf{\Theta}^\text{i}, \mathbf{\Phi}^\text{i} ) \in \mathbb{C}^{N \times M}$.
Then, we represent the aggregated field at the observation point as
\begin{multline}\label{MISO_3}
  S (r^\text{s}, \theta^\text{s}, \phi^\text{s} ) \approx \overbrace{ \tau l (r^\text{s}) \mathbf{v} ( \theta^\text{s}, \phi^\text{s} ) \text{diag} (e^{j \mathbf{\Omega} }) \mathbf{V} ( \mathbf{\Theta}^\text{i}, \mathbf{\Phi}^\text{i} ) }^{ \mathbf{h}^\top_ {\mathbf{\Omega} } } \\
  \underbrace{ \text{diag} ( l (\mathbf{r}^\text{i}) ) \mathbf{E} ( \mathbf{r}^\text{i}, \mathbf{\Theta}^\text{i}, \mathbf{\Phi}^\text{i} ) }_{ \mathbf{E} ( \mathbf{\Theta}^\text{i}, \mathbf{\Phi}^\text{i} ) } . 
\end{multline}

To simplify notation, we denote the measured field as
\begin{equation}\label{MISO_4}
  S (r^\text{s}, \theta^\text{s}, \phi^\text{s} ) = \mathbf{h}^\top_ { \mathbf{\Omega} } \mathbf{E} ( \mathbf{\Theta}^\text{i}, \mathbf{\Phi}^\text{i} ) + \nu . 
\end{equation}
Here, $\nu$ represents the noise, which includes modeling inaccuracies resulting from approximations in the formulation as well as other inherent uncertainties in the measurements. Without loss of generality, we assume the noise follows a distribution with a zero mean and variance $\sigma^2$.

\subsection{Configurational Diversity}
One significant advantage of the RIS-centric backward sensing approach is its configurational diversity, which enables the generation of multiple measurement vectors (several snapshots). We can rewrite the measurement of the aggregated field corresponding to various configurations as 
\begin{multline*}
  S_t (r^\text{s}, \theta^\text{s}, \phi^\text{s} ) = \tau l (r^\text{s}) e^{j \mathbf{\Omega}_t }  \text{diag} ( \mathbf{v} ( \theta^\text{s}, \phi^\text{s} ) ) \\
 \mathbf{V} ( \mathbf{\Theta}^\text{i}, \mathbf{\Phi}^\text{i} ) \mathbf{E} ( \mathbf{\Theta}^\text{i}, \mathbf{\Phi}^\text{i} ) + \nu_t, \ t = 1, \ldots, T . 
\end{multline*}
The $T$ snapshots due to various configurations can be stacked into a column vector. Thus we have
\begin{multline}\label{E:MultiShot}
  \mathbf{S} (r^\text{s}, \theta^\text{s}, \phi^\text{s} ) = \overbrace{ \tau l (r^\text{s}) \mathbf{e}^{j \mathbf{\Omega} } \text{diag} \left( \mathbf{v} ( \theta^\text{s}, \phi^\text{s} ) \right) \mathbf{V} ( \mathbf{\Theta}^\text{i}, \mathbf{\Phi}^\text{i} ) }^{ \mathbf{H}( \mathbf{\Omega} ) } \\
  \text{diag} ( l (\mathbf{r}^\text{i}) ) \mathbf{E} ( \mathbf{r}^\text{i}, \mathbf{\Theta}^\text{i}, \mathbf{\Phi}^\text{i} ) + \mathbf{n} .
\end{multline}
Here $\mathbf{S} (r^\text{s}, \theta^\text{s}, \phi^\text{s} ) = \left[ S_1 (r^\text{s}, \theta^\text{s}, \phi^\text{s} ), \cdots, S_T (r^\text{s}, \theta^\text{s}, \phi^\text{s} ) \right]^\top$, $\mathbf{e}^{j \mathbf{\Omega} } = \left[ e^{j \mathbf{\Omega}_1 } ; \cdots; e^{j \mathbf{\Omega}_T }  \right]$ and $\mathbf{n} = \left[ \nu_1, \cdots, \nu_T \right]^\top$.

Denoting the measurement operator configured by $\mathbf{\Omega}$ as 
\begin{equation}\label{E:H_Omega}
  \mathbf{H} ( \mathbf{\Omega} ) = \tau l (r^\text{s}) \mathbf{e}^{j \mathbf{\Omega} } \text{diag} \left( \mathbf{v} ( \theta^\text{s}, \phi^\text{s} ) \right) \mathbf{V} ( \mathbf{\Theta}^\text{i}, \mathbf{\Phi}^\text{i} ) ,
\end{equation}
we obtain the canonical representation for the input/output behaviors corresponding to a single RIS 
\begin{equation}\label{E:MultiShot2}
  \mathbf{S} (r^\text{s}, \theta^\text{s}, \phi^\text{s} ) = \mathbf{H} ( \mathbf{\Omega} ) \mathbf{E} ( \mathbf{\Theta}^\text{i}, \mathbf{\Phi}^\text{i} ) + \mathbf{n}.
\end{equation}

\begin{rem}
One advantage of the canonical representation \eqref{E:MultiShot2} is its use of a simple system of linear equations to describe the forward signal aggregation process via RISs under reasonable assumptions. This system is both physically accurate and mathematically tractable, serving as a fundamental module in the RIS-centric backward sensing approach. This canonical system facilitates the design of backward sensing algorithms and enables comprehensive performance analyses.
\end{rem}

\begin{rem}
The matrix $\mathbf{V} ( \mathbf{\Theta}^\text{i}, \mathbf{\Phi}^\text{i} )$ represents the multiplexing of waves from the sources to each element of the RIS. A characteristic of this matrix is that if two sources have coherent positions, such as being aligned along the same line when observed via a uniform linear RIS, $\mathbf{V} ( \mathbf{\Theta}^\text{i}, \mathbf{\Phi}^\text{i} )$ can become rank-deficient, leading to the sensing matrix $ \mathbf{H} ( \mathbf{\Omega} )$ becoming ill-conditioned. This deficiency results in degraded performance in the backward sensing approach. When deploying RIS, we should carefully consider the geometry and topology to minimize the risk of ill-conditioning in $\mathbf{V} ( \mathbf{\Theta}^\text{i}, \mathbf{\Phi}^\text{i} )$.
\end{rem}

\section{Backward Sensing Using Multiple RISs} \label{S:BackwardSensing}

It is important to note that while incident distances are accounted for in the canonical representations \eqref{MISO_3} and \eqref{E:MultiShot}, $\mathbf{E} ( \mathbf{\Theta}^\text{i}, \mathbf{\Phi}^\text{i} ) = \text{diag} ( l (\mathbf{r}^\text{i}) ) \mathbf{E} ( \mathbf{r}^\text{i}, \mathbf{\Theta}^\text{i}, \mathbf{\Phi}^\text{i} )$ actually represents the incident fields arriving at the RIS along $( \mathbf{\Theta}^\text{i}, \mathbf{\Phi}^\text{i} )$. In other words, $l(\mathbf{r}^\text{i})$ serves as a distance normalization term. The role of $l (\mathbf{r}^\text{i})$ becomes essential in localizations involving multiple RISs, where large-scale distance attenuation and phase delay factors can be detected by different RISs. Before exploring the capabilities provided by deploying multiple RISs, we first clarify the DoA estimation function using a single RIS.

\subsection{DoA Estimation via a Single RIS}

For the DoA estimation task using a single RIS, the vector $\mathbf{E} ( \mathbf{\Theta}^\text{i}, \mathbf{\Phi}^\text{i} )$ in the canonical representation \eqref{E:MultiShot2} is unknown and needs to be estimated. A fundamental assumption of this scheme is that the sensing operator $\mathbf{H} ( \mathbf{\Omega} )$ is known a priori, which necessitates knowledge of $l (r^\text{s}) \text{diag} \left( \mathbf{v} ( \theta^\text{s}, \phi^\text{s} ) \right)$, $\mathbf{e}^{j \mathbf{\Omega} }$, and $\mathbf{V} ( \mathbf{\Theta}^\text{i}, \mathbf{\Phi}^\text{i} )$. The term $l (r^\text{s}) \text{diag} \left( \mathbf{v} ( \theta^\text{s}, \phi^\text{s} ) \right)$ is relevant to the receiver's location, while $\mathbf{e}^{j \mathbf{\Omega} }$ is determined by the configurations. Both terms are available within the backward sensing scenario.

The situation becomes more complicated for $\mathbf{V} ( \mathbf{\Theta}^\text{i}, \mathbf{\Phi}^\text{i} )$. This matrix is governed by the directions of the incident waves, the central task of DoA estimation. One potential approach involves exploring densely sampled grids across the RoI. While this manipulation significantly expands the solution space, the extensive reconfiguration capabilities ensure the problem remains well-defined under modest assumptions (provided the sampling points are not too closely spaced). This approach is illustrated in Fig.~\ref{F:SamplingOneAntenna} with a linear RIS for better understanding. The authors have previously investigated the forward signal aggregations using linear RISs in \cite{mi2023towards,xiong2024optimal}.

\begin{figure}[!htbp] 
  \centerline{\includegraphics[width=0.9\columnwidth]{./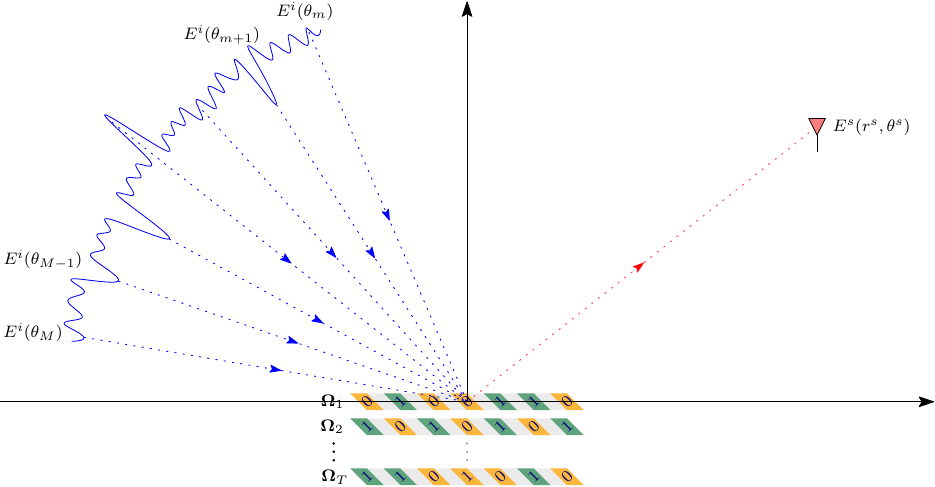}} 
  \caption{Uniform discretization across the azimuth angle domain (RoI). A total of $T$ snapshots are captured corresponding to various configurations.}
  \label{F:SamplingOneAntenna}
\end{figure}

When the sensing operator $\mathbf{H} ( \mathbf{\Omega} )$ is given, ensuring that a set of measurements $\mathbf{S} ( r^\text{s}, \theta^\text{s}, \phi^\text{s} )$ corresponds uniquely to $\mathbf{E} ( \mathbf{\Theta}^\text{i}, \mathbf{\Phi}^\text{i} )$ requires that the rank of $\mathbf{H} ( \mathbf{\Omega} )$ be at least equal to the dimension of RoI. In a typical scenario where an RIS is configured to produce a well-conditioned full column rank matrix $\mathbf{H} ( \mathbf{\Omega} )$, according to \eqref{E:MultiShot2}, a reliable estimation for the incident field can be given by 
\begin{equation}\label{E:InverseSensing}
  \widehat{\mathbf{E}} ( \mathbf{\Theta}^\text{i}, \mathbf{\Phi}^\text{i} ) = \bigl( \mathbf{H} ( \mathbf{\Omega} )^* \mathbf{H} ( \mathbf{\Omega} ) \bigr)^{-1} \mathbf{H} ( \mathbf{\Omega} )^* \mathbf{S} ( r^\text{s}, \theta^\text{s}, \phi^\text{s} ) .
\end{equation}

\begin{rem}
Both \eqref{E:MultiShot2} and \eqref{E:InverseSensing} indicate that a single RIS is limited to determining the directions of incident waves, making tasks such as location or imaging infeasible. For accurate localization, multiple strategically positioned RISs are required. 
\end{rem}

\subsection{Localization Using Multiple RISs}

Similar to the scenario of DoA estimation, where incident angles are discretized, the localization using multiple RISs with spatial diversity also requires volume discretization. A common method involves direct discretization in Cartesian coordinates, as illustrated in Fig.~\ref{F:TwoModes}. This technique typically employs uniform voxels, allowing for straightforward implementation. The advantage of this strategy lies in its versatility and generalizability. However, such discretization may lead to ill-conditioning in the sensing operator, similar to the situation where two sources align along the same line during sensing with uniform linear RISs. We will explore this topic further in Section~\ref{S:Experiments}.

Unlike scenarios involving a single RIS, the operational mode of receivers is crucial in sensing with multiple RISs. The most straightforward mode is each RIS having its dedicated receiver, as depicted in Fig.~\ref{F:Mode1}. Alternatively, there is a mode where the sensing system uses a single receiver. In this setup, the reflected signals from all RISs are aggregated to the receiver, as illustrated in Fig.~\ref{F:Mode2}. Next, we will examine the linear representation of measurement operators in localization using multiple RISs. The form of this operator varies based on the operational mode of the receivers.

\begin{figure}[!htbp]
  \centering
  \subfigure[]{
  \label{F:Mode1}
  \includegraphics[width=.6\columnwidth]{./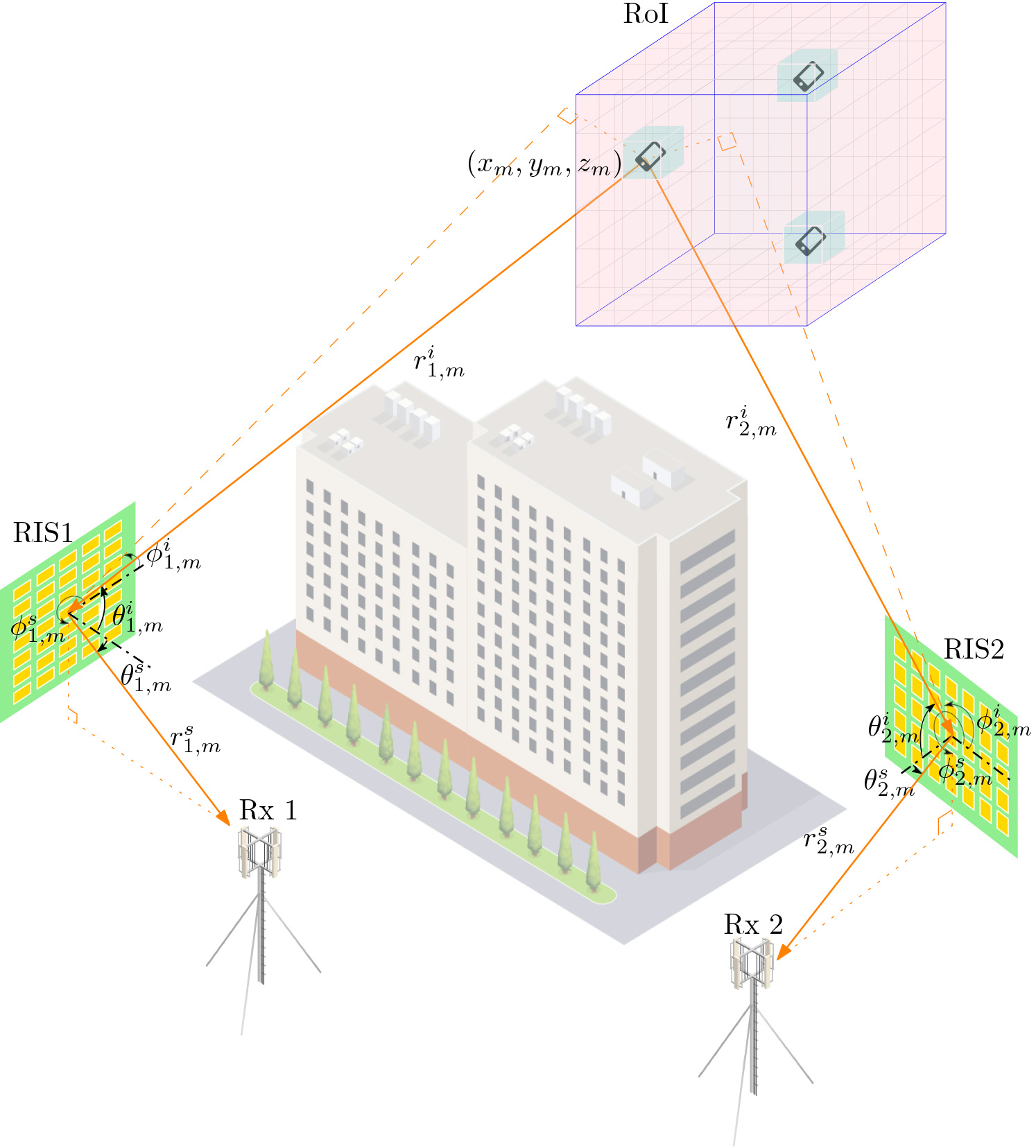}}
  \subfigure[]{
  \label{F:Mode2}
  \includegraphics[width=.6\columnwidth]{./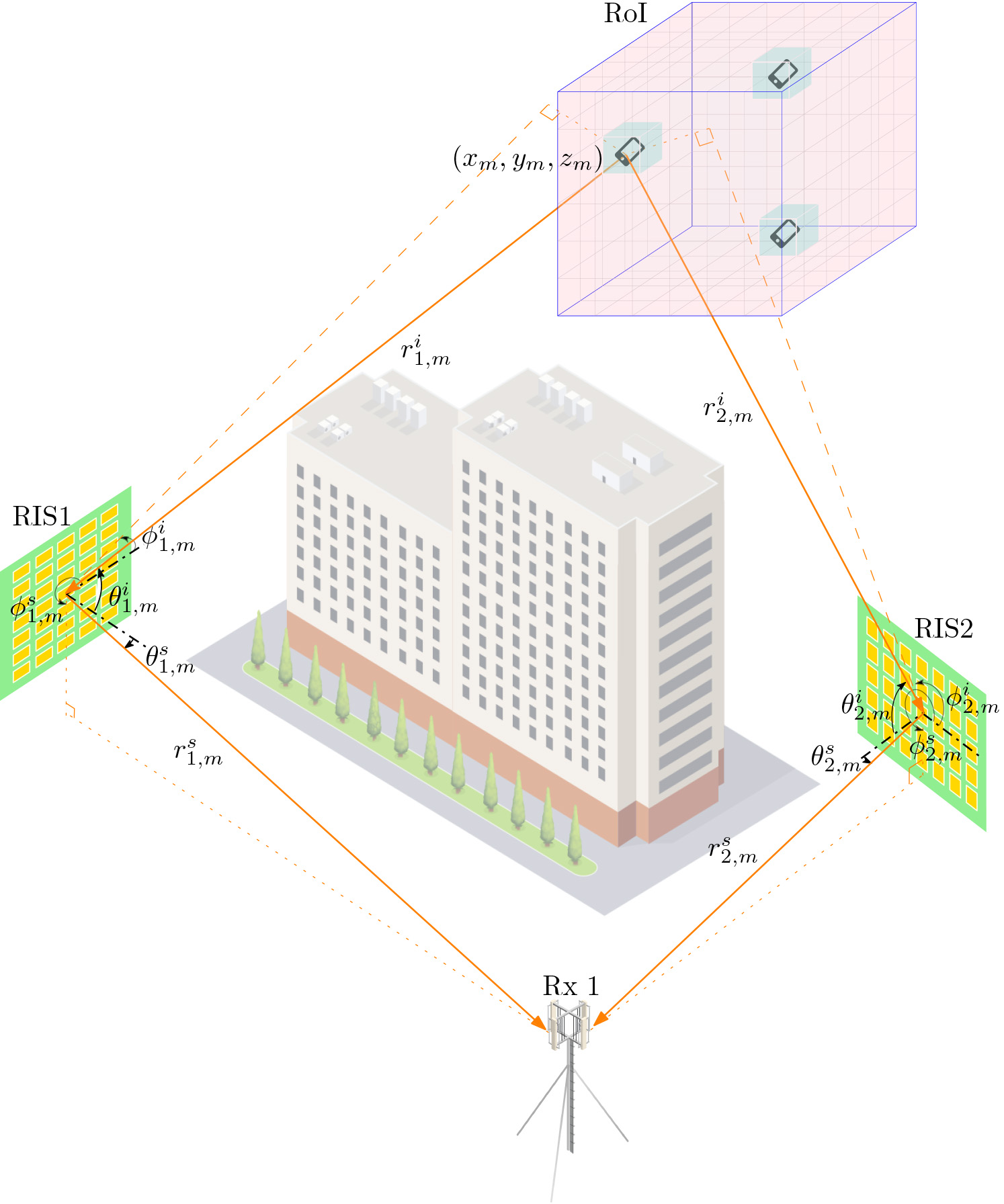}}
  \caption{ Two modes in the RIS-centric sensing system with direct discretization in Cartesian coordinates, where each voxel represents a defined volume: (a) Each RIS is equipped with a dedicated receiver. (b) The system employs a single receiver for aggregating reflected signals from all RISs.} 
  \label{F:TwoModes} 
\end{figure}

In the context of backward sensing, the geometric and topological relationships among the RoI, receivers, and RISs are predetermined. For convenience, we denote by $\mathbf{E} ( \mathbf{x}, \mathbf{y}, \mathbf{z} )$ the incident field in Cartesian coordinates and use the subscript $k$ to index RISs. Utilizing the canonical aggregation \eqref{E:MultiShot2}, we represent the input/output behavior corresponding to each individual RIS as
\[
  \mathbf{S} (r^\text{s}_k, \theta^\text{s}_k, \phi^\text{s}_k ) = \mathbf{H} ( \mathbf{\Omega}_k ) \text{diag} ( l (\mathbf{r}^\text{i}_k ) ) \mathbf{E} ( \mathbf{x}, \mathbf{y}, \mathbf{z} ) + \mathbf{n}_k.
\]

For the mode where each RIS has its dedicated receiver, the measurement vectors corresponding to each RIS can be stacked into a column vector. Thus, we can represent the measurement process associated to the whole system as
\begin{equation}\label{E:Mode_1}
  \begin{bmatrix}
    \mathbf{S} (r^\text{s}_1, \theta^\text{s}_1, \phi^\text{s}_1 ) \\
    \vdots \\
    \mathbf{S} (r^\text{s}_K, \theta^\text{s}_K, \phi^\text{s}_K )
  \end{bmatrix}
  =
  \begin{bmatrix}
    \mathbf{H} ( \mathbf{\Omega}_1 ) \text{diag} ( l (\mathbf{r}^\text{i}_1 ) ) \\
    \vdots \\
    \mathbf{H} ( \mathbf{\Omega}_K ) \text{diag} ( l (\mathbf{r}^\text{i}_K ) )
  \end{bmatrix} 
  \mathbf{E} ( \mathbf{x}, \mathbf{y}, \mathbf{z} ) + \mathbf{n}.
\end{equation}
On the other hand, when the sensing system comprises only one receiver that serves all RISs, we can represent the measurement process
\begin{equation}\label{E:Mode_2}
\begin{aligned}
\mathbf{S} ( r^\text{s}, \theta^\text{s}, \phi^\text{s} ) 
= & \mathbf{S} (r^\text{s}_1, \theta^\text{s}_1, \phi^\text{s}_1 ) + \cdots + \mathbf{S} (r^\text{s}_K, \theta^\text{s}_K, \phi^\text{s}_K ) \\
= &  \bigl[ \mathbf{H} ( \mathbf{\Omega}_1 ) \text{diag} ( l (\mathbf{r}^\text{i}_1 ) ) + \cdots \\
& + \mathbf{H} ( \mathbf{\Omega}_K ) \text{diag} ( l (\mathbf{r}^\text{i}_K ) ) \bigr] \mathbf{E} ( \mathbf{x}, \mathbf{y}, \mathbf{z} ) + \mathbf{n}.
\end{aligned}
\end{equation}

\begin{rem}
Both \eqref{E:Mode_1} and \eqref{E:Mode_2} are systems of linear equations describing forward signal aggregation. The distinction between these systems arises from whether each RIS has its dedicated receiver. In scenarios with specific geometries and topologies, these systems may exhibit distinct structures.
\end{rem}

For simplicity, we represent the measurement process associated with multiple RISs as
\begin{equation}\label{E:LinearMeasurement}
  \mathbf{S} = \mathbf{H} \mathbf{E} + \mathbf{n} .
\end{equation}
The localization problem can be formulated as follows: given the observation $\mathbf{S}$ and the sensing operator $\mathbf{H}$, find a suitable RoI $\mathbf{E}$ that satisfies \eqref{E:LinearMeasurement}.

Similar to DoA estimation via a single RIS, if the length of the observation vector $\mathbf{S}$ is greater than the degree of RoI $\mathbf{E}$, and the sensing operator $\mathbf{H}$ is of full column rank and well-conditioned, the most straightforward estimation of the RoI is the least squares solution to \eqref{E:LinearMeasurement}, which is expressed as
\begin{equation}\label{E:InverseSensingMultiple}
  \hat {\mathbf{E}} = \mathbf{H}^\dag \mathbf{S} = \bigl( \mathbf{H}^* \mathbf{H}  \bigr)^{-1} \mathbf{H}^* \mathbf{S} .
\end{equation}

\begin{rem}
In the proposed RIS-centric backward sensing approach, we model the measurement process in the far-field regime using a system of linear equations. Various reconstruction algorithms can provide high-fidelity solutions, but exploring these algorithms exceedes the scope of this paper. The advantages of the least squares approach lie in its simplicity, broad applicability, and strong theoretical foundation. There exist well-established tools for analyzing the properties of least squares solutions. 
\end{rem}

\section{Key Performance Indicators}\label{S:KeyIndicators}  

This section is to identify the key indicators that dominate the performance of the proposed RIS-centric sensing approaches. It's important to note that the indicators affecting sensing performance are algorithm-dependent. This paper focuses on utilizing least squares solutions, as illustrated in \eqref{E:InverseSensing} and \eqref{E:InverseSensingMultiple}. If the measurement/sensing operator is a well-conditioned full column rank matrix, the least squares solution provides a reliable estimation for the incident field. Therefore, we examine sensing performance from both the rank and conditioning perspectives.

\subsection{Rank of the Sensing Operators}

Suppose the sensing system comprises $K$ RISs, with each RIS composed of $N_k$ elements. These RISs are collaboratively configured to sense an RoI with dimension $M$. In the mode where each RIS has its dedicated receiver, the number of snapshots for each RIS is denoted by $T_k$. On the other hand, in the setup where the sensing system features only one receiver, the number of snapshots is denoted by $T$. The following theorem clarifies the rank of the sensing operators.

\begin{thm}\label{T:1}
  In the mode where each RIS has its dedicated receiver, the rank of the sensing matrix is bounded above by the dimension of the RoI, the total number of measurements, and the total number of elements. Specifically, we have
  \[
    \normalfont 
    \text{rank}(\mathbf{H}) \le \min \left \{ M, \sum_{k=1}^K \min \{ T_k, N_k \} \right \}.
  \]
  Furthermore, in the scenario where the sensing system utilizes only one receiver, the rank of the sensing matrix is bounded by
  \[
    \normalfont 
    \text{rank}(\mathbf{H}) \le \min \left \{ M, T,  \sum_{k=1}^K N_k \right \}.
  \]
\end{thm}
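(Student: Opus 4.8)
The plan is to handle the two modes separately, in each case reducing the statement to three elementary facts: that $\text{rank}(\mathbf{A}\mathbf{B}) \le \min\{\text{rank}(\mathbf{A}), \text{rank}(\mathbf{B})\}$, that an $a \times b$ matrix has rank at most $\min\{a, b\}$, and that the rank of a matrix is unchanged when it is multiplied by an invertible factor. The common preliminary observation is that every normalization factor $\text{diag}(l(\mathbf{r}^\text{i}_k))$ has nonzero diagonal entries $e^{-j 2\pi r^\text{i}/\lambda}/r^\text{i}$ and is therefore an invertible $M \times M$ matrix; likewise $\text{diag}(\mathbf{v}(\theta^\text{s}, \phi^\text{s}))$ has unit-modulus entries and is invertible. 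Hence these diagonal factors can be dropped from any rank computation.

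For the dedicated-receiver mode, I would start from the stacked form \eqref{E:Mode_1} and bound the rank of each block $\mathbf{H}(\mathbf{\Omega}_k)\text{diag}(l(\mathbf{r}^\text{i}_k))$ individually. Expanding $\mathbf{H}(\mathbf{\Omega}_k)$ via \eqref{E:H_Omega}, I factor it as $\mathbf{e}^{j\mathbf{\Omega}_k}$ (of size $T_k \times N_k$) times a remaining $N_k \times M$ matrix. Submultiplicativity of rank together with the size of the leading factor gives $\text{rank}(\mathbf{H}(\mathbf{\Omega}_k)) \le \min\{T_k, N_k\}$, and the trailing invertible diagonal factor leaves this unchanged. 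Then I invoke subadditivity of rank under vertical concatenation, namely that the row space of a stacked matrix is the sum of the row spaces of its blocks, to conclude $\text{rank}(\mathbf{H}) \le \sum_{k=1}^K \min\{T_k, N_k\}$. Intersecting this with the trivial column bound $\text{rank}(\mathbf{H}) \le M$ yields the first inequality.

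For the single-receiver mode, the natural move is to rewrite the sum in \eqref{E:Mode_2} as a single product. Setting $\mathbf{A}_k = \mathbf{e}^{j\mathbf{\Omega}_k} \in \mathbb{C}^{T \times N_k}$ and letting $\mathbf{B}_k \in \mathbb{C}^{N_k \times M}$ collect the remaining factors of the $k$-th summand (including $\text{diag}(l(\mathbf{r}^\text{i}_k))$), I would express $\mathbf{H} = \sum_k \mathbf{A}_k \mathbf{B}_k = [\mathbf{A}_1, \cdots, \mathbf{A}_K] [\mathbf{B}_1^\top, \cdots, \mathbf{B}_K^\top]^\top$, a product of a $T \times (\sum_k N_k)$ matrix with a $(\sum_k N_k) \times M$ matrix. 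A single application of the product and size bounds then gives $\text{rank}(\mathbf{H}) \le \min\{T, \sum_k N_k, M\}$, which is the second inequality.

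I do not anticipate a serious obstacle, since every step is a standard rank inequality; the only point requiring care is the invertibility of the diagonal normalization factors, which relies solely on the distances $r^\text{i}, r^\text{s}$ being finite and nonzero and on the entries of $\mathbf{v}$ being unit-modulus. The mild subtlety worth flagging is the structural difference between the two modes: in the dedicated-receiver case the bound is the sum of per-block minima $\sum_k \min\{T_k, N_k\}$, obtained from vertical stacking, whereas in the single-receiver case it is the single minimum $\min\{T, \sum_k N_k\}$, obtained from horizontal concatenation of the $\mathbf{A}_k$. Choosing the right decomposition for each mode is precisely what makes the two bounds come out in their stated forms.
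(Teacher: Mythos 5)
Your proposal is correct, and for the dedicated-receiver mode it is essentially the paper's own argument: bound each block $\mathbf{H}(\mathbf{\Omega}_k)\,\mathrm{diag}(l(\mathbf{r}^{\mathrm{i}}_k))$ by $\min\{T_k,N_k\}$ via the factorization in \eqref{E:MultiShot}, note the invertible diagonal factors are rank-neutral, apply subadditivity of rank under vertical stacking, and intersect with the column bound $M$. Where you genuinely add something is the single-receiver mode: the paper disposes of it with ``the proof is similar,'' which most naturally suggests subadditivity of rank under matrix \emph{addition}, i.e.\ $\mathrm{rank}(\mathbf{H}) \le \sum_k \mathrm{rank}\bigl(\mathbf{H}(\mathbf{\Omega}_k)\,\mathrm{diag}(l(\mathbf{r}^{\mathrm{i}}_k))\bigr) \le \sum_k \min\{T,N_k\} \le \sum_k N_k$, combined with the trivial row bound $T$ and column bound $M$. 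Your route instead rewrites the sum as a single product,
\[
\mathbf{H} \;=\; \sum_{k=1}^{K} \mathbf{A}_k \mathbf{B}_k \;=\; \bigl[\,\mathbf{A}_1\ \cdots\ \mathbf{A}_K\,\bigr]
\bigl[\,\mathbf{B}_1^\top\ \cdots\ \mathbf{B}_K^\top\,\bigr]^\top ,
\]
with inner dimension $\sum_k N_k$, so that one application of the product and size bounds delivers $\min\{M,\,T,\,\sum_k N_k\}$ in a single stroke. The two arguments are equally elementary and yield the same bound, but yours is arguably tidier: it makes explicit the structural duality you flag --- vertical concatenation in the dedicated-receiver mode produces the sum of per-block minima $\sum_k \min\{T_k,N_k\}$, while horizontal concatenation in the single-receiver mode produces the single minimum $\min\{T,\sum_k N_k\}$ --- a distinction the paper leaves implicit. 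One small remark: for the upper bounds, invertibility of $\mathrm{diag}(l(\mathbf{r}^{\mathrm{i}}_k))$ and $\mathrm{diag}(\mathbf{v})$ is not actually needed (rank of a product never exceeds the rank of any factor); it only matters if one wants the equality $\mathrm{rank}\bigl(\mathbf{H}(\mathbf{\Omega}_k)\,\mathrm{diag}(l(\mathbf{r}^{\mathrm{i}}_k))\bigr) = \mathrm{rank}\bigl(\mathbf{H}(\mathbf{\Omega}_k)\bigr)$ that the paper asserts, so your care on that point is sound but slightly more than the statement requires.
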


\begin{proof}
We first establish the proof for the mode where each RIS has its dedicated receiver. Since $\mathbf{H}$ consists of $M$ columns, $\text{rank}(\mathbf{H}) \le M$. For each $k$, we note that
\[
\text{rank} \left ( \mathbf{H} ( \mathbf{\Omega}_k ) \text{diag} ( l (\mathbf{r}^\text{i}_k ) ) \right) = \text{rank} \left ( \mathbf{H} ( \mathbf{\Omega}_k )  \right) \le \min \{ T_k, N_k \} .
\]
The last inequality arises from the fact that $\mathbf{H} ( \mathbf{\Omega}_k )$ is a product of three matrices, as illustrated in \eqref{E:MultiShot}. The rank of the first matrix $\mathbf{e}^{j \mathbf{\Omega}_k }$ is bounded by $T_k$, and the rank of the second matrix is $N_k$. Because $\mathbf{H}$ is the stacking of $K$ such matrices, we have
\[
  \text{rank}(\mathbf{H}) \le \sum_{k=1}^K \min \{ T_k, N_k \} .
\]
The proof is similar for the mode where the sensing system utilizes only one receiver. 
\end{proof}

\begin{rem}
In an RIS-centric backward sensing scheme, the rank of the sensing matrix indicates how many degrees of freedom of the scene can be measured or coded by this operator. Theorem~\ref{T:1} specifies that, to achieve high-fidelity recovery of the RoI, a necessary condition is that the dimension of the RoI is less than both the total number of elements and the number of measurements. This finding has significant implication for the understanding that when the size of RISs is predetermined, the dimension of RoI is bounded by this size, even if the number of measurements can be greatly increased. 
\end{rem}

\subsection{The Spatial Resolution Analysis}

We now investigate the resolution associated with least squares solutions. For clarity, we focus solely on exploring the angular resolution corresponding to the uniform linear RIS. In fact, the relative error of a least squares solution is given by
\begin{equation}\label{E:RelativeError}
  \begin{aligned}
  \frac{ \lVert \hat {\mathbf{E}} - \mathbf{E} \rVert } { \lVert \mathbf{E} \rVert } 
  = & \frac{ \lVert {\mathbf{H}}^{\dag} \mathbf{H} \mathbf{E} - \mathbf{E} + {\mathbf{H}}^{\dag} \mathbf{n} \rVert } { \lVert \mathbf{E} \rVert } \\
  \le & \frac{ \lVert {\mathbf{H}}^{\dag} \mathbf{H} \mathbf{E} - \mathbf{E} \rVert + \lVert {\mathbf{H}}^{\dag} \mathbf{n} \rVert } { \lVert \mathbf{E} \rVert } \\
  \le & \lVert {\mathbf{H}}^{\dag} \mathbf{H} - \mathbf{I} \rVert + \lVert {\mathbf{H}}^{\dag} \rVert \lVert \mathbf{n} \rVert / \lVert \mathbf{E} \rVert \\
  = & \lVert {\mathbf{H}}^{\dag} \mathbf{H} - \mathbf{I} \rVert + \sqrt{T} \lVert {\mathbf{H}}^{\dag} \rVert / \sqrt{\mathrm{SNR}} .
  \end{aligned}
\end{equation}
For simplicity, we define $\mathrm{SNR} = \lVert \mathbf{E} \rVert^2 / \sigma^2$.

The relative error is bounded above by the summation of two terms. When the measurement operator $\mathbf{H}$ is a full column rank matrix, 
achieving $\lVert {\mathbf{H}}^{\dag} \mathbf{H} - \mathbf{I} \rVert \approx 0$ is feasible. Furthermore, if $\lVert {\mathbf{H}}^{\dag} \rVert$ 
remains bounded by a reasonable value, then recovery using the least squares solution ensures high fidelity. It should be noted that $\lVert {\mathbf{H}}^{\dag} \rVert = 1 / \sigma_{\text{min}} ( \mathbf{H} )$. We will then examine the minimum singular value of $\mathbf{H}$.

In a sensing system, the minimum resolvable distance is defined as the spatial resolution. The underlying motivation of this definition is that when two point sources are in close proximity, the system cannot distinguish them as two distinct points. Consider a scenario where only two incident waves, one along $\theta^{\text{i}}$ and another along $\theta^{\text{i}}+\Delta$, are impinging on the RIS. According to~\eqref{E:H_Omega}, the measurement operator is reduced to 
\[
  \mathbf{H} ( \theta^{\text{i}}, \theta^{\text{i}}+\Delta ) = \tau l (r^\text{s}) \mathbf{e}^{j \mathbf{\Omega} } \text{diag} \left( \mathbf{v} ( \theta^\text{s} ) \right) \mathbf{V} ( \theta^{\text{i}}, \theta^{\text{i}}+\Delta ) .
\]

When $\Delta \neq 0$, the matrix $\mathbf{V} ( \theta^{\text{i}}, \theta^{\text{i}}+\Delta )$ is of full column rank. And the matrix $\text{diag} \left( \mathbf{v} ( \theta^\text{s} ) \right)$ is also of full rank. Therefore, we have
\begin{equation}\label{E:SingularValueBound}
  \begin{aligned}
    & \sigma_{\text{min}} ( \mathbf{H} (\theta^{\text{i}}, \theta^{\text{i}}+\Delta) ) \\
    \ge & \lvert \tau l (r^\text{s}) \rvert \sigma_{\text{min}} ( \mathbf{e}^{j \mathbf{\Omega} } )  \sigma_{\text{min}} \left( \text{diag} ( \mathbf{v} ( \theta^\text{s} ) ) \right) \sigma_{\text{min}} \left( \mathbf{V} ( \theta^{\text{i}}, \theta^{\text{i}}+\Delta ) \right) \\
    = & \lvert \tau l (r^\text{s}) \rvert \sigma_{\text{min}} ( \mathbf{e}^{j \mathbf{\Omega} } ) \sigma_{\text{min}} \left( \mathbf{V} ( \theta^{\text{i}}, \theta^{\text{i}}+\Delta ) \right) .
  \end{aligned}
\end{equation}
This inequality is a consequence of the lower bound for the smallest singular value of a product of two matrices, as illustrated in Lemma~\ref{L:3}.

\begin{lem}\label{L:3}
If $\mathbf{B}$ has full column rank, then $\sigma_{\text{min}} ( \mathbf{A} \mathbf{B} ) \ge \sigma_{\text{min}} ( \mathbf{A} ) \sigma_{\text{min}} ( \mathbf{B} )$.
\end{lem}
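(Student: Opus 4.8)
The plan is to prove the lemma directly from the variational (minimum) characterization of the smallest singular value, which reduces the matrix inequality to a short chain of scalar norm bounds. The only fact I will rely on is that, for any matrix $\mathbf{M}$,
\[
  \sigma_{\text{min}} ( \mathbf{M} ) = \min_{ \lVert \mathbf{x} \rVert = 1 } \lVert \mathbf{M} \mathbf{x} \rVert ,
\]
or equivalently $\lVert \mathbf{M} \mathbf{x} \rVert \ge \sigma_{\text{min}} ( \mathbf{M} ) \lVert \mathbf{x} \rVert$ for every $\mathbf{x}$ in the domain of $\mathbf{M}$. This is immediate from the singular value decomposition by expanding $\mathbf{x}$ in an orthonormal basis of right singular vectors, and the minimum is attained since the unit sphere is compact and $\mathbf{x} \mapsto \lVert \mathbf{M} \mathbf{x} \rVert$ is continuous.

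First I would choose a unit vector $\mathbf{x}_0$ attaining the minimum for the product, so that $\sigma_{\text{min}} ( \mathbf{A} \mathbf{B} ) = \lVert \mathbf{A} \mathbf{B} \mathbf{x}_0 \rVert$. Setting $\mathbf{y} = \mathbf{B} \mathbf{x}_0$ and applying the characterization to $\mathbf{A}$ yields $\lVert \mathbf{A} \mathbf{y} \rVert \ge \sigma_{\text{min}} ( \mathbf{A} ) \lVert \mathbf{y} \rVert$. Next I would invoke the hypothesis that $\mathbf{B}$ has full column rank: this guarantees that $\sigma_{\text{min}} ( \mathbf{B} )$ is strictly positive and that $\lVert \mathbf{B} \mathbf{x}_0 \rVert \ge \sigma_{\text{min}} ( \mathbf{B} ) \lVert \mathbf{x}_0 \rVert = \sigma_{\text{min}} ( \mathbf{B} )$ is a genuine, nonvacuous lower bound. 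Chaining the two estimates gives
\[
  \sigma_{\text{min}} ( \mathbf{A} \mathbf{B} ) = \lVert \mathbf{A} \mathbf{y} \rVert \ge \sigma_{\text{min}} ( \mathbf{A} ) \lVert \mathbf{B} \mathbf{x}_0 \rVert \ge \sigma_{\text{min}} ( \mathbf{A} ) \sigma_{\text{min}} ( \mathbf{B} ) ,
\]
which is exactly the assertion.

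The step I expect to require the most care is fixing the meaning of $\sigma_{\text{min}}$ for rectangular matrices, particularly for $\mathbf{A}$. The bound $\lVert \mathbf{A} \mathbf{y} \rVert \ge \sigma_{\text{min}} ( \mathbf{A} ) \lVert \mathbf{y} \rVert$ must be read with $\sigma_{\text{min}} ( \mathbf{A} )$ equal to the minimum of $\lVert \mathbf{A} \mathbf{z} \rVert$ over unit $\mathbf{z}$ in the whole domain; if $\mathbf{A}$ had more columns than rows, a naive ``smallest nonzero singular value'' reading could violate the inequality. This is why the full-column-rank hypothesis is imposed on $\mathbf{B}$ and not on $\mathbf{A}$: it is precisely what makes $\min_{ \lVert \mathbf{x} \rVert = 1 } \lVert \mathbf{B} \mathbf{x} \rVert = \sigma_{\text{min}} ( \mathbf{B} )$ a strictly positive quantity, so that the resulting bound is useful (and, in the intended application, $\mathbf{A} = \mathbf{e}^{j \mathbf{\Omega}}$ is tall with full column rank, so all conventions coincide). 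With this convention settled, the remaining manipulations are routine.
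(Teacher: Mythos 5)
Your proof is correct and follows essentially the same route as the paper's: both rest on the variational characterization $\sigma_{\text{min}}(\mathbf{M}) = \min_{\lVert \mathbf{x} \rVert = 1} \lVert \mathbf{M}\mathbf{x} \rVert$ and chain the two bounds through $\mathbf{y} = \mathbf{B}\mathbf{x}$, with full column rank of $\mathbf{B}$ ensuring the middle quantity is nonvacuous. The only difference is presentational (you evaluate at a minimizer $\mathbf{x}_0$, while the paper factors the quotient $\frac{\lVert \mathbf{A}\mathbf{B}\mathbf{x} \rVert}{\lVert \mathbf{B}\mathbf{x} \rVert}\cdot\frac{\lVert \mathbf{B}\mathbf{x} \rVert}{\lVert \mathbf{x} \rVert}$ inside the minimum), and your remark on the domain-wide convention for $\sigma_{\text{min}}$ of rectangular matrices is a sound clarification rather than a deviation.
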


\begin{proof}
Given that $\mathbf{B}$ has full column rank, for any $x \neq 0$, we have $\mathbf{B} x \neq 0$. Then,
  \begin{multline*}
    \sigma_{\text{min}} ( \mathbf{A} \mathbf{B} ) 
    = \min_{x \neq 0} \frac{ \lVert \mathbf{A} \mathbf{B} x \rVert }{ \lVert x \rVert } 
    = \min_{x \neq 0} \frac{ \lVert \mathbf{A} \mathbf{B} x \rVert }{ \lVert \mathbf{B} x \rVert } \frac{ \lVert \mathbf{B} x \rVert }{ \lVert x \rVert } \\
    \ge \min_{y \neq 0} \frac{ \lVert \mathbf{A} y \rVert }{ \lVert y \rVert } \min_{x \neq 0} \frac{ \lVert \mathbf{B} x \rVert }{ \lVert x \rVert }
    = \sigma_{\text{min}} ( \mathbf{A} ) \sigma_{\text{min}} ( \mathbf{B} ) .
  \end{multline*}
\end{proof}

\begin{lem}\label{L:4}
Consider a Vandermonde matrix given by
\begin{equation}\label{E:VandermodeMatrix}
  \mathbf{V} ( \theta^{\mathrm{i}}, \theta^{\mathrm{i}}+\Delta ) =
  \begin{bmatrix}
    1                                                   &  1                                                   \\
    e^{ j 2 \pi d \sin \theta^{\mathrm{i}} / \lambda }             &  e^{ j 2 \pi d \sin ( \theta^{\mathrm{i}}+\Delta ) / \lambda }             \\
    \vdots                                              &  \vdots                                              \\
    e^{ j 2 \pi (N-1) d \sin \theta^{\mathrm{i}} / \lambda } &  e^{ j 2 \pi (N-1) d \sin ( \theta^{\mathrm{i}}+\Delta ) / \lambda }
\end{bmatrix} ,
\end{equation}
where $\Delta \neq 0$. The singular values of $\mathbf{V} ( \theta^{\mathrm{i}}, \theta^{\mathrm{i}}+\Delta )$ are given by
\begin{multline*}
  \sigma_{\text{max}} ( \mathbf{V} ( \theta^{\mathrm{i}}, \theta^{\mathrm{i}}+\Delta ) ) \\
  = \sqrt{ N + \left| \frac{ \sin ( \pi N d (\sin \theta^{\mathrm{i}} - \sin ( \theta^{\mathrm{i}}+\Delta ) ) / \lambda ) }{ \sin ( \pi d ( \sin \theta^{\mathrm{i}} - \sin ( \theta^{\mathrm{i}}+\Delta ) ) / \lambda ) } \right|  } ,
\end{multline*}
and 
\begin{multline}\label{E:LeastSingularValue1}
  \sigma_{\text{min}} ( \mathbf{V} ( \theta^{\mathrm{i}}, \theta^{\mathrm{i}}+\Delta ) ) \\
  = \sqrt{ N - \left| \frac{ \sin ( \pi N d (\sin \theta^{\mathrm{i}} - \sin ( \theta^{\mathrm{i}}+\Delta ) ) / \lambda ) }{ \sin ( \pi d ( \sin \theta^{\mathrm{i}} - \sin ( \theta^{\mathrm{i}}+\Delta ) ) / \lambda ) } \right|  } .
\end{multline}
\end{lem}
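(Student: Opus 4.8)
The plan is to compute the singular values directly from the $2\times 2$ Gram matrix $\mathbf{V}^* \mathbf{V}$, whose eigenvalues are precisely the squared singular values. Since $\mathbf{V}(\theta^{\mathrm{i}}, \theta^{\mathrm{i}}+\Delta)$ has only two columns, this reduces the entire problem to diagonalizing a single $2\times 2$ Hermitian matrix, and the two singular values can be read off in closed form.

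First I would denote the two columns of $\mathbf{V}$ by $\mathbf{a}$ and $\mathbf{b}$, whose $n$-th entries are $e^{j 2\pi n d \sin\theta^{\mathrm{i}} / \lambda}$ and $e^{j 2\pi n d \sin(\theta^{\mathrm{i}}+\Delta) / \lambda}$ for $n = 0, \ldots, N-1$. Each diagonal entry of $\mathbf{V}^* \mathbf{V}$ is a sum of $N$ terms of unit modulus, so $\mathbf{a}^* \mathbf{a} = \mathbf{b}^* \mathbf{b} = N$. The off-diagonal entry is $\mathbf{a}^* \mathbf{b} = \sum_{n=0}^{N-1} e^{j n \gamma}$, where $\gamma = 2\pi d (\sin(\theta^{\mathrm{i}}+\Delta) - \sin\theta^{\mathrm{i}}) / \lambda$, a finite geometric series. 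Summing it in closed form and taking the modulus gives $\lvert \mathbf{a}^* \mathbf{b} \rvert = \lvert \sin(N\gamma/2) / \sin(\gamma/2) \rvert$; substituting $\gamma$ and using $\lvert \sin(-x) \rvert = \lvert \sin x \rvert$ turns this into exactly the Dirichlet-kernel quantity appearing in the statement. Thus $\mathbf{V}^* \mathbf{V} = \begin{bmatrix} N & c \\ \bar{c} & N \end{bmatrix}$ with $\lvert c \rvert$ equal to that quantity.

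The last step is to diagonalize this matrix. A Hermitian $2\times 2$ matrix with equal diagonal entries $N$ and off-diagonal entry $c$ has characteristic polynomial $(N - \mu)^2 - \lvert c \rvert^2 = 0$, hence eigenvalues $N \pm \lvert c \rvert$. Taking square roots, the singular values of $\mathbf{V}$ are $\sqrt{N \pm \lvert c \rvert}$; assigning the plus sign to $\sigma_{\text{max}}$ and the minus sign to $\sigma_{\text{min}}$ yields the two claimed formulas.

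The only step requiring genuine care is the geometric-series computation together with the bookkeeping that confirms the argument of the Dirichlet kernel matches the stated form; everything else is routine $2\times 2$ linear algebra. As a sanity check tied to its use in \eqref{E:SingularValueBound}, I would note that since $\Delta \neq 0$ and the element spacing keeps $\gamma$ away from nonzero multiples of $2\pi$, the denominator $\sin(\gamma/2)$ is nonzero and the Dirichlet kernel satisfies $\lvert c \rvert < N$ strictly, so $\sigma_{\text{min}}(\mathbf{V}) > 0$ and $\mathbf{V}$ indeed has full column rank, consistent with the hypothesis invoked earlier.
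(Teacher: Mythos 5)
Your proposal is correct and follows essentially the same route as the paper's own proof: forming the $2\times 2$ Gram matrix $\mathbf{V}^{*}\mathbf{V}$ with diagonal entries $N$, summing the off-diagonal geometric series into the Dirichlet-kernel form, and reading off the eigenvalues $N \pm \lvert c \rvert$ before taking square roots. Your added sanity check that $\lvert c \rvert < N$ (so $\sigma_{\min} > 0$ when the phase increment avoids multiples of $2\pi$) is a small bonus the paper leaves implicit, but it does not change the argument.
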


\begin{proof}
See the Appendix~\ref{S:Appendix_A3}
\end{proof}

\begin{lem}\label{L:5}
For $x$ sufficiently small, we have
\begin{equation}\label{E:sin_sin}
  \frac{ \sin (N x) }{\sin (x)} \approx N - \frac{ N ( N^2 - 1 ) x^2 }{ 6 } .
\end{equation}
\end{lem}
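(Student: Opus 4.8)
The plan is to prove the approximation by Maclaurin expansion of the numerator and denominator about $x=0$, retaining terms through third order and then forming the quotient. Since $\sin(Nx)/\sin(x)$ is an even function of $x$, only even powers of $x$ survive in the expansion, so it suffices to track the constant and $x^2$ coefficients; everything omitted is $O(x^4)$.

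First I would invoke the standard series $\sin(t) = t - t^3/6 + O(t^5)$ with $t = Nx$ and $t = x$ to write
\[
  \sin(Nx) = Nx - \frac{N^3 x^3}{6} + O(x^5), \qquad \sin(x) = x - \frac{x^3}{6} + O(x^5).
\]
Next I would factor a single power of $x$ out of both, which is legitimate for $x$ small and nonzero, giving
\[
  \frac{\sin(Nx)}{\sin(x)} = \frac{\, N - \frac{N^3}{6} x^2 + O(x^4) \,}{\, 1 - \frac{1}{6} x^2 + O(x^4) \,}.
\]
Then I would expand the reciprocal of the denominator via $(1-u)^{-1} = 1 + u + O(u^2)$ with $u = \frac{1}{6} x^2 + O(x^4)$, multiply the resulting series into the numerator, and collect terms up to order $x^2$:
\[
  \left( N - \frac{N^3}{6} x^2 \right)\left( 1 + \frac{1}{6} x^2 \right) + O(x^4) = N + \frac{N}{6} x^2 - \frac{N^3}{6} x^2 + O(x^4).
\]
Combining the two $x^2$ terms yields $-\frac{N(N^2-1)}{6} x^2$, which is exactly the claimed approximation.

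There is no genuine obstacle here; the result is a routine second-order Taylor computation. The only point demanding a little care is bookkeeping of the error orders: one must confirm that the cross-term $\frac{N}{6} x^2$ (from multiplying the leading $N$ by the $\frac{1}{6}x^2$ correction of the reciprocal denominator) is retained, since it is precisely this term that upgrades the naive coefficient $-N^3/6$ to $-N(N^2-1)/6$. Because the function is even, all odd-order contributions vanish automatically, so the expansion is valid to $O(x^4)$ and the stated two-term approximation follows.
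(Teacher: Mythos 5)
Your computation is correct: the factorization of $x$ from both series, the expansion $(1-u)^{-1}=1+u+O(u^2)$ with $u=\tfrac{1}{6}x^2+O(x^4)$, and the collection of the two $x^2$ contributions all check out, giving $N-\tfrac{N(N^2-1)}{6}x^2+O(x^4)$ with the error order correctly tracked. However, your route differs from the paper's. The paper does not divide Taylor series at all; it first invokes the Dirichlet-kernel identity
\[
  \frac{\sin Nx}{\sin x}=\frac{e^{iNx}-e^{-iNx}}{e^{ix}-e^{-ix}}=\sum_{l=0}^{N-1} e^{i(N-1-2l)x},
\]
which exhibits the quotient as a finite sum of cosines (distinguishing $N$ odd and even), and then Taylor-expands each cosine term to obtain the closed-form coefficient
\[
  a_k=\frac{(-1)^k}{(2k)!}\sum_{l=0}^{N-1}(N-1-2l)^{2k}
\]
of $x^{2k}$ for \emph{every} $k$, from which $a_0=N$ and $a_1=-\tfrac{N(N^2-1)}{6}$ are read off. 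The paper's approach buys more: it shows the quotient is a trigonometric polynomial (hence entire, with evenness manifest rather than argued), and it yields all higher-order coefficients exactly, should a sharper approximation ever be needed. Your approach is more elementary and shorter, requiring only the standard $\sin$ series and a geometric-series inversion, and it suffices entirely for the two-term statement actually claimed; the only care it demands is the cross-term bookkeeping you correctly flagged, namely retaining $N\cdot\tfrac{1}{6}x^2$, which converts the naive $-N^3/6$ into $-N(N^2-1)/6$.
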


\begin{proof}
  See the Appendix~\ref{S:Appendix_A4}
\end{proof}

Combining \eqref{E:LeastSingularValue1} and \eqref{E:sin_sin} yields an approximation for the smallest singular value of $\mathbf{V}(\theta^{\text{i}}, \theta^{\text{i}}+\Delta)$
\begin{equation}\label{E:ApproximationV}
\begin{aligned}
  & \sigma_{\text{min}} ( \mathbf{V} ( \theta^{\text{i}}, \theta^{\text{i}}+\Delta ) ) \\
  \approx & \frac{ \pi }{ \sqrt{6} } \frac{ d }{ \lambda } \sqrt{ N ( N^2 - 1 ) } \lvert \sin \theta^{\text{i}} - \sin ( \theta^{\text{i}} + \Delta ) \rvert  \\
  \approx & \frac{ \pi }{ \sqrt{6} } \frac{ d }{ \lambda } \sqrt{ N ( N^2 - 1 ) } \lvert \Delta \rvert \cos \theta^{\text{i}} .
\end{aligned}
\end{equation}
In the last step, we utilize the first-order Taylor expansion of $\sin ( \theta^{\text{i}} + \Delta )$ at $\theta^{\text{i}}$.

Next, we will examine $\sigma_{\text{min}} ( \mathbf{e}^{j \mathbf{\Omega} } )$ using the Marchenko-Pastur law~\cite{marchenko1967distribution}, which characterizes the asymptotic behavior of singular values of large rectangular random matrices. Let's consider a random matrix $\mathbf{X}$ of size $T \times N$ $(T>N)$, whose entries are independent identically distributed random variables with mean 0 and variance 1. We define the empirical spectral distribution of $\mathbf{Y}_N = \mathbf{X}^* \mathbf{X} / T$ as 
\[
\mu_{ \mathbf{Y}_N } \equiv \frac{1}{N} \sum_{i=1}^{N} \delta_{\lambda_i( \mathbf{Y}_N )},
\]
where $\lambda_1( \mathbf{Y}_N ) \le \cdots \le \lambda_N( \mathbf{Y}_N )$ are the eigenvalues (including multiplicity) and $\delta_{ \lambda_i( \mathbf{Y}_N ) }(x)$ is the indicator function $\mathbf{1}_{ \lambda_i( \mathbf{Y}_N ) \le x }$.

A celebrated theorem in random matrix theory, known as the Marchenko-Pastur distribution, describes the limiting spectral distribution of  $\mathbf{Y}_N$ \cite{marchenko1967distribution, anderson2010introduction}. Under the Kolmogorov condition, i.e., $T, N \to \infty$ and $N / T \to c \in (0, 1)$, $\mu_{ \mathbf{Y}_N } $ converges weakly and almost surely to $\mu_{\text{M-P}}$ with a density function
\[
d \mu_{\text{M-P}} (x) = \frac{1}{2 \pi c x} \sqrt{(x-a)^{+}(b-x)^{+}} d x .
\]
Here $a = (1 - \sqrt{c})^2$, $b = (1 + \sqrt{c})^2$, and $a^{+} = \max (0, a)$. The Marchenko-Pastur distribution implies, as $T, N \to \infty$, the probability of the eigenvalues of $\mathbf{Y}_N$ lying outside the interval $[a, b]$ approaches 0.

We now turn our attention to the matrix $\mathbf{e}^{j \mathbf{\Omega}}$, whose entries are independent identically distributed random variables with mean 0 and variance 1. By leveraging the Marchenko-Pastur law, we can obtain a good approximation for its smallest singular value in the regime of large $T$ and $N$,
\begin{equation}\label{E:ApproximationE}
  \sigma_{\text{min}} ( \mathbf{e}^{j \mathbf{\Omega} } ) \approx \sqrt{T} \left( 1 - \sqrt{ \frac{N}{T}} \right) = \sqrt{T} - \sqrt{N}. 
\end{equation}

Substituting \eqref{E:ApproximationV} and \eqref{E:ApproximationE} into \eqref{E:SingularValueBound} yields
\begin{equation*}
\begin{aligned}
  & \sigma_{\text{min}} ( \mathbf{H} (\theta^{\text{i}}, \theta^{\text{i}}+\Delta) ) \\
  \gtrsim & \frac{ \pi }{ \sqrt{6} } \tau ( r^\text{s} )^{-1} \frac{ d }{ \lambda }  ( \sqrt{T} - \sqrt{N} ) \sqrt{ N ( N^2 - 1 ) } \lvert \Delta \rvert \cos \theta^{\text{i}}  \\
  \approx & \frac{ \pi }{ \sqrt{6} } \tau ( r^\text{s} )^{-1} \frac{ d }{ \lambda }  \left( 1 - \sqrt{ \frac{N}{T} } \right) T^{ \frac{1}{2} } N^{\frac{3}{2}} \lvert \Delta \rvert \cos \theta^{\text{i}}.
\end{aligned}
\end{equation*}
It then follows that
\begin{multline*}
  \left\lVert {\mathbf{H} (\theta^{\text{i}}, \theta^{\text{i}}+\Delta) }^+ \right\rVert \lesssim  \frac{\sqrt{6}}{\pi} r^\text{s} \tau^{-1} \left( \frac{ d }{ \lambda } \right)^{-1}  \left( 1 - \sqrt{ \frac{N}{T} } \right)^{-1} \\
  T^{ - \frac{1}{2} } N^{ - \frac{3}{2} } \lvert \Delta \rvert ^{-1} ( \cos \theta^{\text{i}} )^{-1}. 
\end{multline*}
We finally obtain the main theorem, as detailed below.

\begin{thm}\label{T:RelativeError}
If only two closely incident waves along $\theta^{\mathrm{i}}$ and $\theta^{\mathrm{i}}+\Delta$ are impinging on a uniform linear RIS, the relative error for the least squares solution is bounded approximately
\begin{multline}\label{E:RelativeErrorBound}
  \frac{ \lVert \hat {\mathbf{E}} - \mathbf{E} \rVert } { \lVert \mathbf{E} \rVert } \lesssim \frac{\sqrt{6}}{\pi} r^{\mathrm{s}} \tau^{-1} \left( \frac{ d }{ \lambda } \right)^{-1} \left( 1 - \sqrt{ \frac{N}{T} } \right)^{-1} \\
  N^{ - \frac{3}{2} } \lvert \Delta \rvert^{-1} ( \cos \theta^{\mathrm{i}} )^{-1} \mathrm{SNR}^{-1/2}.
\end{multline}
\end{thm}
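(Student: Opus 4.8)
The plan is to assemble the bound \eqref{E:RelativeErrorBound} directly from the error decomposition already derived in \eqref{E:RelativeError}, so that the entire argument reduces to controlling one scalar quantity: the operator norm $\lVert \mathbf{H}^{\dag} \rVert = 1/\sigma_{\text{min}}(\mathbf{H})$. First I would invoke the full column rank hypothesis on the reduced operator $\mathbf{H}(\theta^{\text{i}}, \theta^{\text{i}}+\Delta)$. When $\mathbf{H}$ has full column rank, $\mathbf{H}^{\dag}\mathbf{H} = (\mathbf{H}^{*}\mathbf{H})^{-1}\mathbf{H}^{*}\mathbf{H} = \mathbf{I}$ exactly, so the term $\lVert \mathbf{H}^{\dag}\mathbf{H}-\mathbf{I}\rVert$ in \eqref{E:RelativeError} vanishes identically. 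This leaves only the noise term $\sqrt{T}\,\lVert \mathbf{H}^{\dag}\rVert/\sqrt{\mathrm{SNR}}$, and the theorem becomes purely a statement about a lower bound on $\sigma_{\text{min}}(\mathbf{H})$.

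Next I would lower-bound $\sigma_{\text{min}}(\mathbf{H})$ using the factored form in \eqref{E:H_Omega}, namely $\mathbf{H} = \tau l(r^{\text{s}})\,\mathbf{e}^{j\mathbf{\Omega}}\,\text{diag}(\mathbf{v}(\theta^{\text{s}}))\,\mathbf{V}(\theta^{\text{i}},\theta^{\text{i}}+\Delta)$. Because $\Delta \neq 0$, the Vandermonde factor $\mathbf{V}$ has full column rank, and the diagonal phase matrix $\text{diag}(\mathbf{v}(\theta^{\text{s}}))$ is unitary since its entries have unit modulus, so its smallest singular value equals $1$. Applying Lemma~\ref{L:3} across the product then yields \eqref{E:SingularValueBound}, reducing the task to estimating $\sigma_{\text{min}}(\mathbf{e}^{j\mathbf{\Omega}})$ and $\sigma_{\text{min}}(\mathbf{V})$ separately. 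For the Vandermonde factor I would substitute the closed-form smallest singular value of Lemma~\ref{L:4}, then apply the small-angle expansion of Lemma~\ref{L:5} together with a first-order Taylor expansion of $\sin(\theta^{\text{i}}+\Delta)$ at $\theta^{\text{i}}$, producing the approximation \eqref{E:ApproximationV}. For the random phase factor I would appeal to the Marchenko-Pastur law to obtain the spectral-edge estimate $\sigma_{\text{min}}(\mathbf{e}^{j\mathbf{\Omega}}) \approx \sqrt{T}-\sqrt{N}$ recorded in \eqref{E:ApproximationE}.

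Finally I would combine these estimates. Substituting \eqref{E:ApproximationV} and \eqref{E:ApproximationE} into \eqref{E:SingularValueBound} gives the lower bound on $\sigma_{\text{min}}(\mathbf{H})$ displayed immediately before the theorem, whose reciprocal bounds $\lVert \mathbf{H}^{\dag}\rVert$. Multiplying by $\sqrt{T}/\sqrt{\mathrm{SNR}}$ and observing that the $\sqrt{T}$ precisely cancels the $T^{-1/2}$ inherited from $\sigma_{\text{min}}(\mathbf{e}^{j\mathbf{\Omega}})$ leaves the $T$-independent prefactor $(1-\sqrt{N/T})^{-1}$ and reproduces \eqref{E:RelativeErrorBound}.

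The main obstacle is conceptual rather than computational: the estimate \eqref{E:ApproximationE} is only asymptotically valid in the Kolmogorov regime $T,N\to\infty$ with $N/T\to c\in(0,1)$, so the resulting bound is necessarily approximate, which is why $\lesssim$ appears throughout, and it implicitly requires $T>N$ both for the full column rank hypothesis and for the edge $\sqrt{T}-\sqrt{N}$ to stay positive. The delicate point is to state precisely in what sense the final inequality holds, since the two approximation steps, the small-$\Delta$ Vandermonde expansion and the large-dimensional random-matrix edge, operate in different limiting regimes that must be reconciled before the product bound \eqref{E:SingularValueBound} can be read as a genuine estimate.
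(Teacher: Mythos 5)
Your proposal matches the paper's proof essentially step for step: the error decomposition \eqref{E:RelativeError}, Lemma~\ref{L:3} applied to the factorization \eqref{E:H_Omega} (with the unit-modulus diagonal factor contributing $\sigma_{\text{min}}=1$), the Vandermonde closed form of Lemma~\ref{L:4} combined with the small-angle expansion of Lemma~\ref{L:5} and the first-order Taylor step to get \eqref{E:ApproximationV}, the Marchenko--Pastur edge estimate \eqref{E:ApproximationE}, and the same final $\sqrt{T}$ cancellation against the noise term. Your remark that $\mathbf{H}^{\dag}\mathbf{H}=\mathbf{I}$ holds exactly under full column rank is in fact slightly sharper than the paper's statement that $\lVert \mathbf{H}^{\dag}\mathbf{H}-\mathbf{I}\rVert \approx 0$ is feasible, and your caveats about the Kolmogorov regime and $T>N$ correctly identify why the bound is stated with $\lesssim$.
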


\begin{rem}
Theorem~\ref{T:RelativeError} clarifies the factors that affect the sensing performance of a single uniform linear RIS. As seen from \eqref{E:RelativeErrorBound}, in addition to the SNR, three distinct categories of factors play significant roles in determining the relative error. The first category relates to the topology and geometry of the RIS, including parameters such as $\tau$ (the scattering pattern of the element), $N$ (the number of elements) and $d$ (the spacing between elements). By rewriting $( d / \lambda )^{-1} N^{ - \frac{3}{2} }$ as $( N d / \lambda )^{-1} N^{ - \frac{1}{2} }$, we observe that the bound is inversely proportional to $N d$, representing the aperture of the RIS. Unlike the situation in forward beamforming, where undesirable grating lobes arise when $d / \lambda > 1/2$, there is no such constraint on the spacing $d$ in backward sensing. Indeed, increasing $d$ can enhance the aperture, leading to greater gains. However, it's essential to note that increasing $N$ (the number of elements) can potentially provide more gains. This observation can be attributed to the fact that the power reflected by RISs grows quadratically with the total collecting area of the elements, as clarified in Section~\ref{S:Experiments}.
\end{rem}

\begin{rem}
The second category that affects sensing performance is related to incident angles and the receiver's position. As the receiver moves closer to the RIS, it captures more power, resulting in lower recovery errors. This relationship is detailed in Eq.~\eqref{E:RelativeErrorBound}, where the recovery error is directly proportional to $r^\text{s}$. Moreover, the relative error is inversely proportional to $\lvert \Delta \rvert$, providing insight into the spatial resolution performance. Mathematically, for small values of $\lvert \Delta \rvert$, the matrix $\mathbf{V} ( \theta^{\text{i}}, \theta^{\text{i}}+\Delta )$ becomes ill-conditioned, which in turn leads to ill-conditioning of the measurement matrix $\mathbf{H} ( \theta^{\text{i}}, \theta^{\text{i}}+\Delta )$. Consequently, even a small noise in the measurement procedure can result in much larger errors in the recovery. Additionally, the recovery performance worsens as the scanned range moves away from the normal direction of the RIS.
\end{rem}
 
\begin{rem}
  The third category that affects sensing performance is related to the number of measurements. The relationship between the bound and the number of measurements $T$ is described by the term $( 1 - \sqrt{ N / T } )^{-1}$. This indicates that the bound is inversely proportional to $1 - \sqrt{ N / T }$. When $T \gg N$, this term asymptotically approaches 1, its minimum value. As a rule of thumb, increasing the number of measurements well beyond the number of elements can potentially enhance the performance of backward sensing.
\end{rem}

\section{Numerical Validation} \label{S:Experiments}

Theorems~\ref{T:1} and \ref{T:RelativeError} provide theoretical insights into the key parameters that affect the performance of the proposed RIS-centric sensing approaches. In this section, numerical validations are presented. Specifically, our goal is to sense a region where multiple power sources of various shapes have been manually placed. We comprehensively analyze the impact of several factors, including the total number of samples, the total number of elements, the number of RISs and their positions. 

To evaluate the quality of backward sensing, we use two metrics: relative error and structural similarity index measure (SSIM). The SSIM is a perceptual metric to quantify the degradation of the recovered field $\hat {\mathbf{E}}$ compared to the ground truth $\mathbf{E}$, defined as
\[
  \text{SSIM}=\frac{ \left( 2 \mu_{ \mathbf{E} } \mu_{ \hat {\mathbf{E}} } + c_1 \right) \left( 2 \sigma_{ \mathbf{E} \hat {\mathbf{E}}} + c_2 \right) } { \left( \mu_{\mathbf{E}}^{2}+\mu_{ \hat {\mathbf{E}} }^{2}+c_1\right) \left( \sigma_{\mathbf{E}}^{2}+\sigma_{ \hat {\mathbf{E}} }^{2}+c_2 \right) } .
\]
Here $\mu_{ \hat {\mathbf{E}} }$ and $\mu_{ \mathbf{E} }$ are the means of $\hat {\mathbf{E}}$ and $\mathbf{E}$, $\sigma^2_{\hat {\mathbf{E}}}$ and $\sigma^2_{\mathbf{E}}$ are the variances, and $\sigma_{ \mathbf{E} \hat {\mathbf{E}}}$ is the covariance between $\mathbf{E}$ and $\hat {\mathbf{E}}$. The constants $c_1$ and $c_2$ are used to stabilize the division with weak denominator. The SSIM value lies within the range of $[0, 1]$, where a higher value indicates that the recovered field exhibits structures more similar to the ground truth.

Theorem~\ref{T:1} asserts that the rank of the sensing matrix is upper-bounded by the total number of measurements and the total number of elements. To assess the influence of these factors, we first examine the performance of backward sensing in the noiseless case.

\begin{figure}[!htbp] 
  \centerline{\includegraphics[width=0.9\columnwidth]{./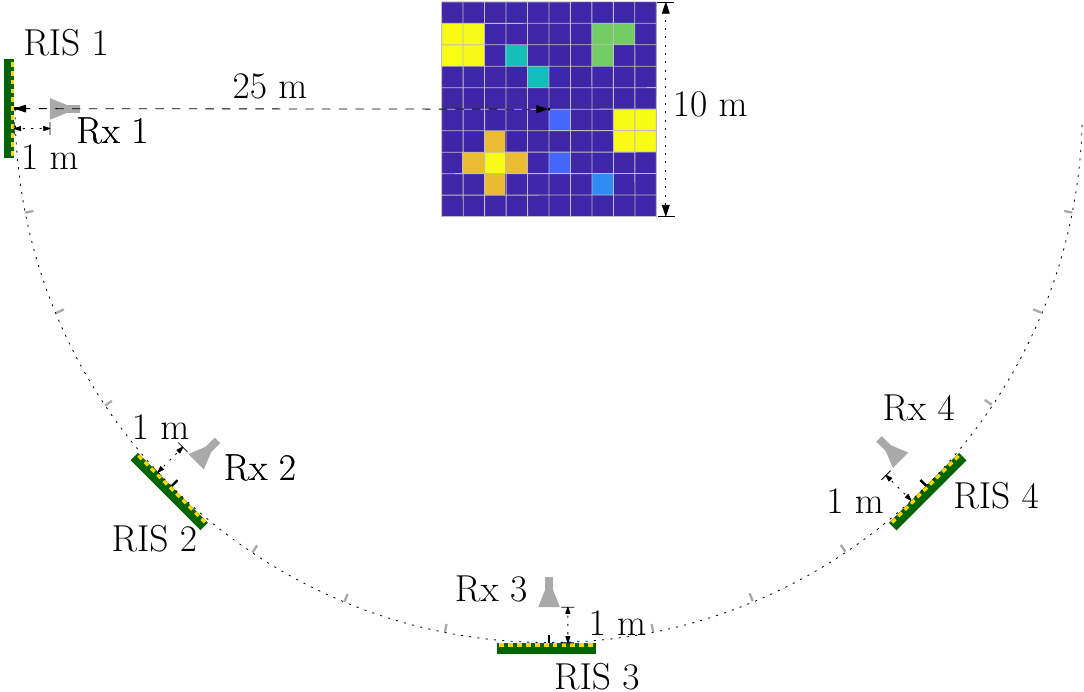}}
  \caption{Scenario configuration used in the simulation. A 10-meter by 10-meter square region is discretized into a 10 $\times$ 10 grid with 1-meter intervals. Four uniform linear RISs are strategically positioned 25 meters around the RoI, with each adjacent RIS separated by an angle of 45$^\circ$.}
  \label{T_scene} 
\end{figure}

\begin{table}[h!] 
  \centering
  \caption{Default setup parameters used in the simulation experiments.} 
  \label{tab:simulation_parameters} 
  \begin{tabular}{lll} 
  \toprule 
  \textbf{Parameter}                   & \textbf{Symbol}                                    &  \textbf{Value}     \\ \midrule 
  RoI                                  &   $x \times y $                                    &  10 m $\times$ 10 m   \\
  Spatial resolution (Pixel size)      &  $ \bigtriangleup  x \times \bigtriangleup y $ &   1 m $\times$ 1 m    \\
  Pixel dimension                      &  $ M_x \times M_y$                                 &   10 $\times$ 10    \\ 
  Number of RISs                        &  K                                                 &   4                 \\
  Operating frequency                  &  $f$                                               &   $30$ GHz          \\
  RIS element distance                 &  $d$                                               &   $0.01$ m          \\
  Distance from RoI to RIS             &  $L$                                               &   25 m               \\
  \bottomrule 
  \end{tabular} 
\end{table} 

Let's imagine a scenario where the task is to sense a square region measuring $10$ meters by $10$ meters. Four uniform linear RISs are strategically positioned around the RoI. Each RIS is equipped with a dedicated receiver. The topology and geometry are demonstrated in Fig.~\ref{T_scene}. The measurement vectors corresponding to each RIS are stacked into a column vector, as depicted in \eqref{E:Mode_1}. The system operates at a frequency of $f = 30$ GHz. The RoI is discretized into $10 \times 10$ pixels. The parameters are detailed in Table \ref{tab:simulation_parameters}. 

For a comprehensive illustration, we systematically record the recovered results as the number of measurements $T_k$ for each RIS varies at 10, 20, 50, and 100, with the number of elements fixed at $N_k = 50$ for each $k$. The total number of elements sums up to $N = \sum_{k=1}^4 N_k = 200$. The reconstructed scenes are illustrated in Fig.\ref{T_ls_matrix}. We plot the relative error and SSIM values in Fig.~\ref{SSIMofT}. Additionally, we increase the number of elements of each RIS from 20 to 50, while keeping the number of measurements fixed at $T_k = 50$ (yielding a total of $T = \sum_{k=1}^4 T_k = 200$ measurements). The results are presented in Figs.~\ref{N_ls_matrix} and \ref{SSIMofN}.

\begin{figure}[!htbp] 
  \centerline{\includegraphics[width=1\columnwidth]{./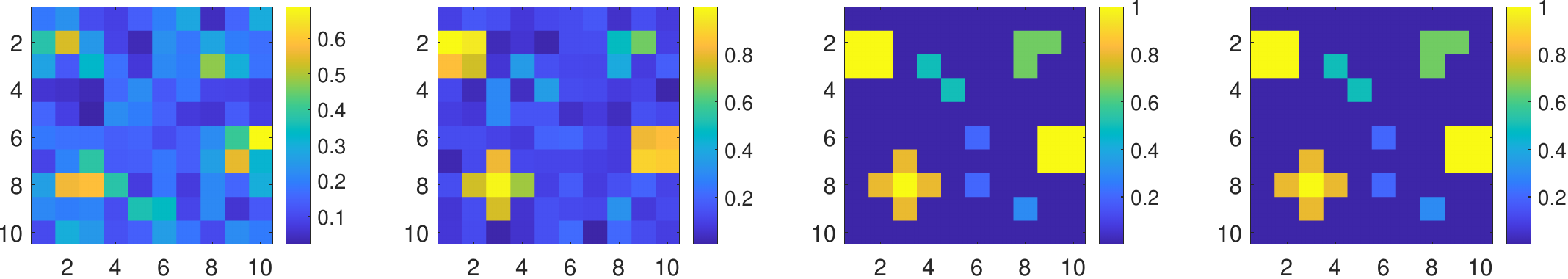}}
  \caption{Comparison of the recovered scenes as the number of measurements $T_k$ for each RIS varied at 10, 20, 50, and 100, with the number of elements fixed at $N_k = 50$ for each $k$.}
  \label{T_ls_matrix}
\end{figure}

\begin{figure}[!htbp] 
  \centerline{\includegraphics[width=0.6\columnwidth]{./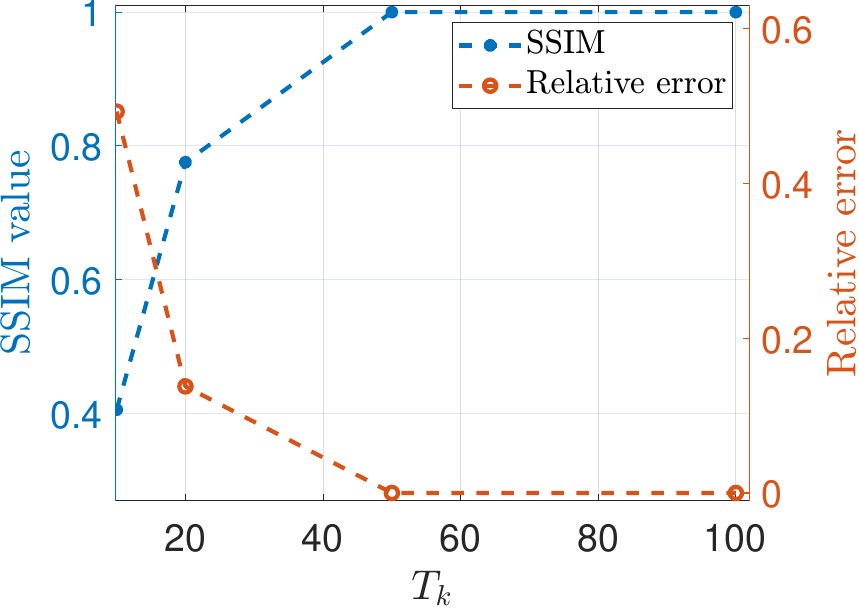}}
  \caption{Relative error and SSIM as the number of measurements $T_k$ for each RIS varied at 10, 20, 50, and 100, with the number of elements fixed at $N_k = 50$ for each $k$.}
  \label{SSIMofT}
\end{figure}

\begin{figure}[!htbp] 
  \centerline{\includegraphics[width=1\columnwidth]{./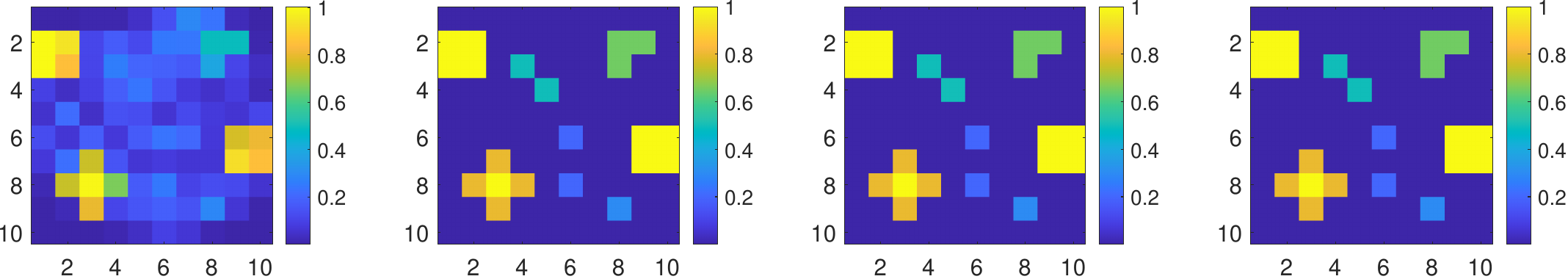}}
  \caption{Comparison of the recovered scenes as the number of elements $N_k$ for each RIS varied at 20, 30, 40, and 50, with the number of measurements fixed at $T_k = 50$ for each $k$.}
  \label{N_ls_matrix}
\end{figure}

\begin{figure}[!htbp] 
  \centerline{\includegraphics[width=0.6\columnwidth]{./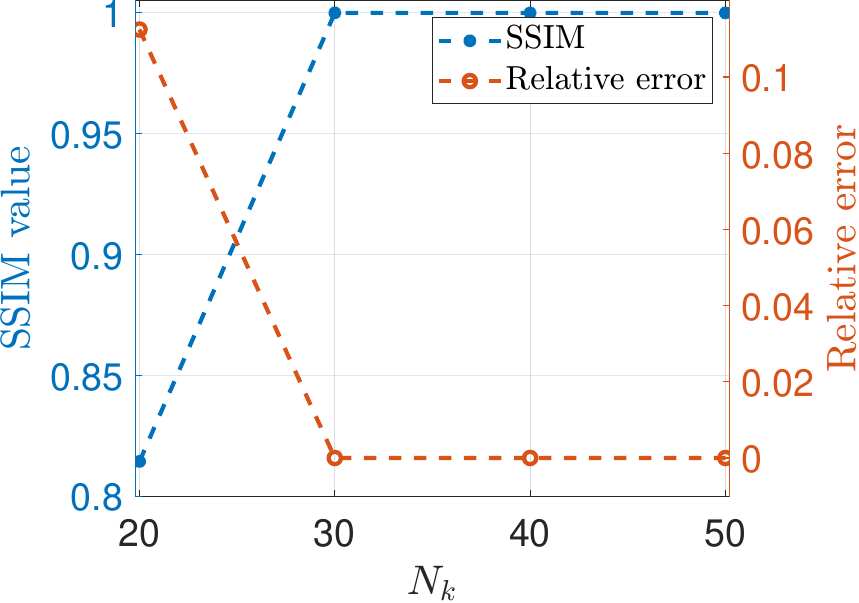}}
  \caption{Relative error and SSIM as the number of elements $N_k$ for each RIS varied at 20, 30, 40, and 50, with the number of measurements fixed at $T_k = 50$ for each $k$.}
  \label{SSIMofN}
\end{figure}

These plots demonstrate that increasing both the number of measurements and the number of elements enhances the performance of backward sensing. Thess results are consistent with intuitive expectations. When $N$ or $T$ is not sufficiently large, a fundamental limitation arises: the sensing matrix encodes the high-dimensional scene into a low-dimensional observation. Clearly, this dimensionality reduction makes recovering the original scene using linear operators impractical. For instance, with $T_k = 10$, accurately reconstructing an original vector in $\mathbb{C}^{100}$ from an observation lying in $\mathbb{C}^{40}$ becomes challenging.

It's worth emphasizing that even if the sensing matrix encodes the scene vector into a high-dimensional observation, achieving high-precision recovery might still be challenging. The presence of noise, particularly in cases of ill-conditioned sensing matrices, makes high-fidelity reconstruction from observations unfeasible. For better understanding, we illustrate the singular values corresponding to all the aforementioned trials in Fig.~\ref{SVD_Illustration}.

\begin{figure}[!htbp]
  \centering
  \subfigure[]{
  \label{SVD_1}
  \includegraphics[width=.48\columnwidth]{./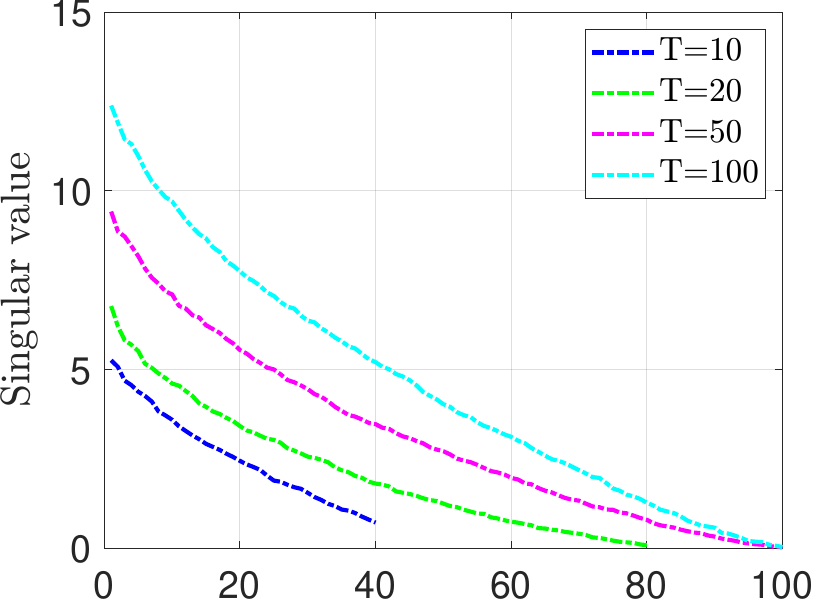}}
  \subfigure[]{
  \label{SVD_2}
  \includegraphics[width=.48\columnwidth]{./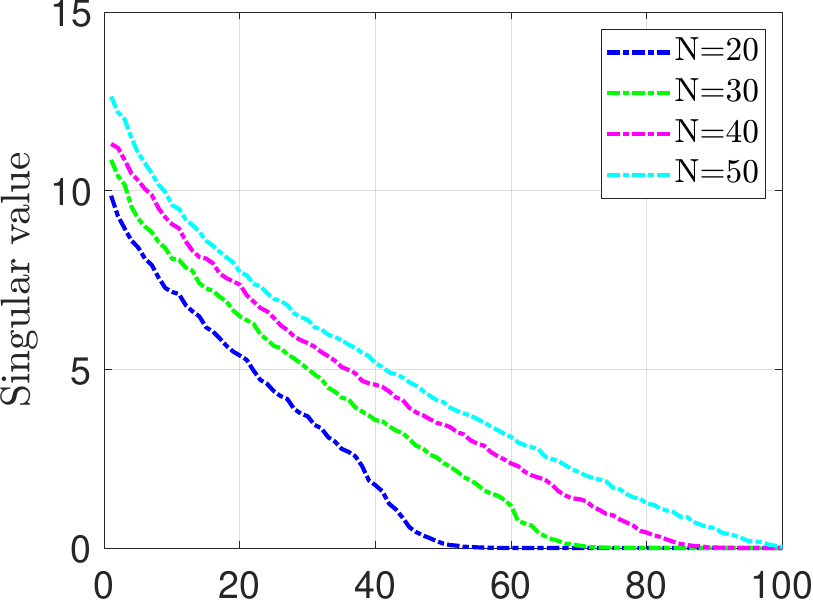}}
  \caption{Sorted singular values corresponding to all trials. (a) The number of measurements $T_k$ is varied at 10, 20, 50, and 100, with the number of elements fixed at $N_k = 50$. (b) $K$ The number of elements $N_k$ is varied at 20, 30, 40, and 50, with the number of measurements fixed at $T_k = 50$.}
  \label{SVD_Illustration} 
\end{figure}

In the context of backward sensing using multiple RISs, the conditioning of the sensing matrix depends not only on the number of elements and measurements but also on factors such as the system's geometry and topology. Optimizing the topology of RISs has the potential to enhance the conditioning of the sensing matrix, thereby improving the robustness of scene reconstruction.

To demonstrate this improvement, we conduct a series of experiments by strategically selecting eight landmark positions around the RoI, each of which serves as a potential location for an RIS. These RISs are positioned tangentially to a circle encompassing the RoI. The location of these landmarks are shown in Fig~\ref{PK_scene}. The specific positions are detailed in Table~\ref{tab:KP_parameters}. To ensure a fair comparison, we maintain the total number of elements at $N = \sum_{k} N_k = 200$ and the total number of measurements at $T = \sum_{k} T_k = 200$. To assess the stability of the sensing system, additive Gaussian noise is introduced in these experiments, with the SNR set to 30 dB.

\begin{table}
  \centering
  \caption{Deployment strategies for combinations of different candidate positions.} 
  \label{tab:KP_parameters} 
  \begin{tabular}{lll} 
  \toprule 
  \textbf{Category}                        & \textbf{Configuration}              &  \textbf{Description}                                            \\ \midrule 
  Strategy \uppercase\expandafter{\romannumeral1}:     &  \{A\}                           &  \makecell[l]{-Use only a single landmark position }                 \\
  Strategy \uppercase\expandafter{\romannumeral2}:    &  \{A, E\}                        &  \makecell[l]{-Use two landmark positions perpendi-\\cular to the RoI}  \\
  Strategy \uppercase\expandafter{\romannumeral3}:  &  \{A, C, E, G\}                  &  \makecell[l]{-Use four landmark positions with each \\ adjacent interval of them at an angle of \\45$^\circ$} \\ 
  Strategy \uppercase\expandafter{\romannumeral4}:  & \makecell[l]{\{A, B, C, D,\\ E, F, G, H\} } &  \makecell[l]{-Use all landmark positions with high \\landmark density}      \\ 
  \bottomrule 
  \end{tabular} 
\end{table} 

The results of the backward sensing experiments are presented in Fig.\ref{KP_ls_matrix}. In addition to the reconstructed scenes, we also plot the sorted singular values and the condition numbers of the sensing matrices in Fig.\ref{KP_cond_number_SVD}. For better illustration, the y-axis uses a logarithmic scale. It is evident that recovering the scene with acceptable fidelity is impossible when using only 1 or 2 RISs in this specific scenario. Notably, when using 2 RISs, the condition number of the sensing matrix is greater than $3.1 \times 10^5$. The backward sensing procedure is highly sensitive to perturbations in the observations.

\begin{figure}[!htbp] 
  \centerline{\includegraphics[width=1\columnwidth]{./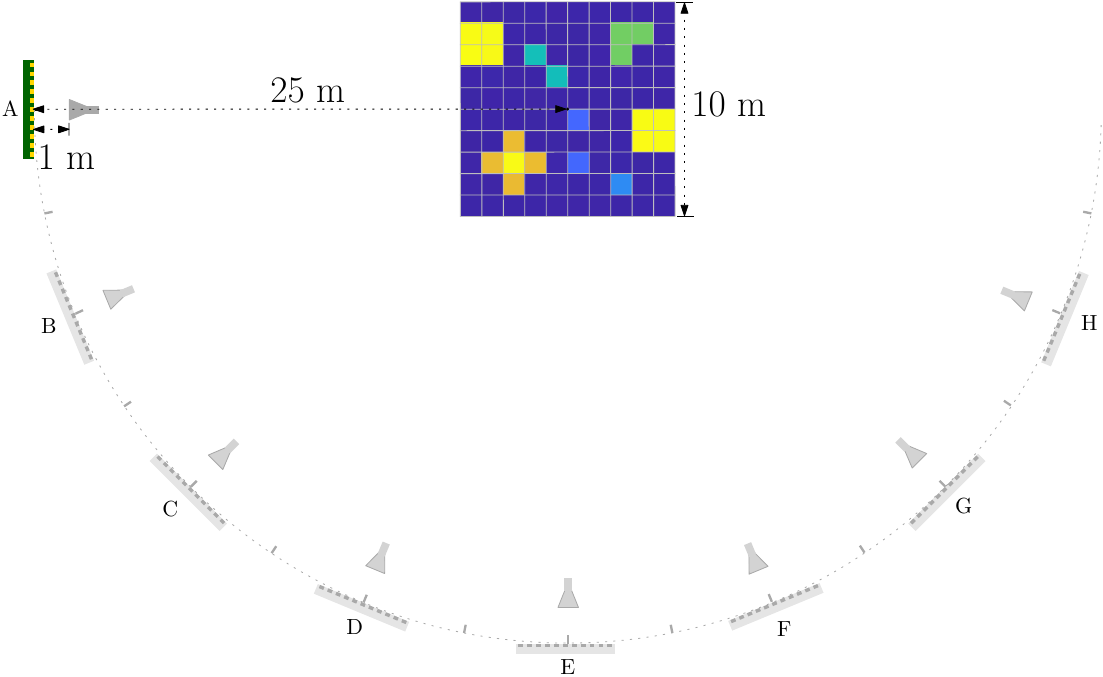}}
  \caption{Scenario configuration and landmark positions. Eight landmark positions are set 25 meters around the RoI, labeled from A to H. These landmarks are spaced at 22.5$^\circ$ intervals relative to the RoI, serving as candidate locations for RIS deployment.}
  \label{PK_scene}
\end{figure}

\begin{figure}[!htbp] 
  \centerline{\includegraphics[width=0.9\columnwidth]{./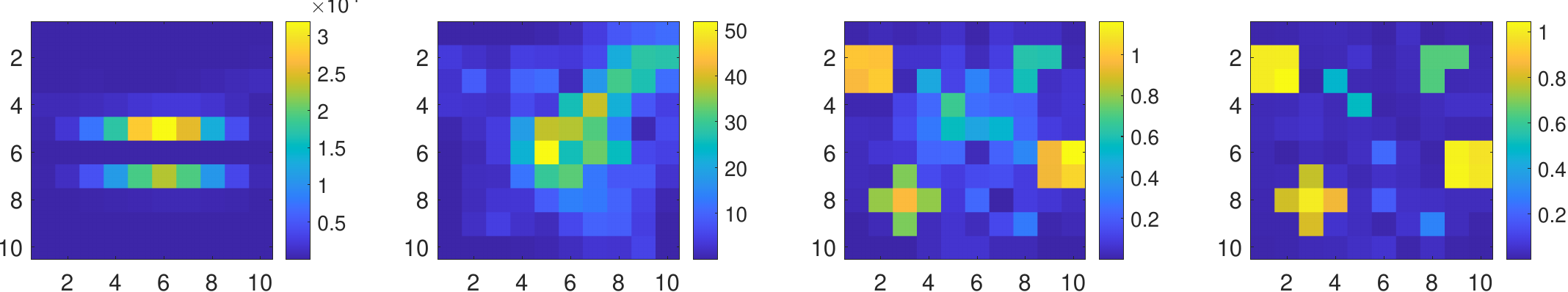}} 
  \caption{Comparison of the recovered scenes using various geometries and topologies of RISs. Achieving acceptable fidelity in scene recovery is impossible when employing only 1 or 2 RISs in this specific scenario.}
  \label{KP_ls_matrix} 
\end{figure}

\begin{figure}[!htbp] 
  \centering
  \subfigure[]{
  \label{SVSofsensingmatrixKP}
  \includegraphics[width=.47\columnwidth]{./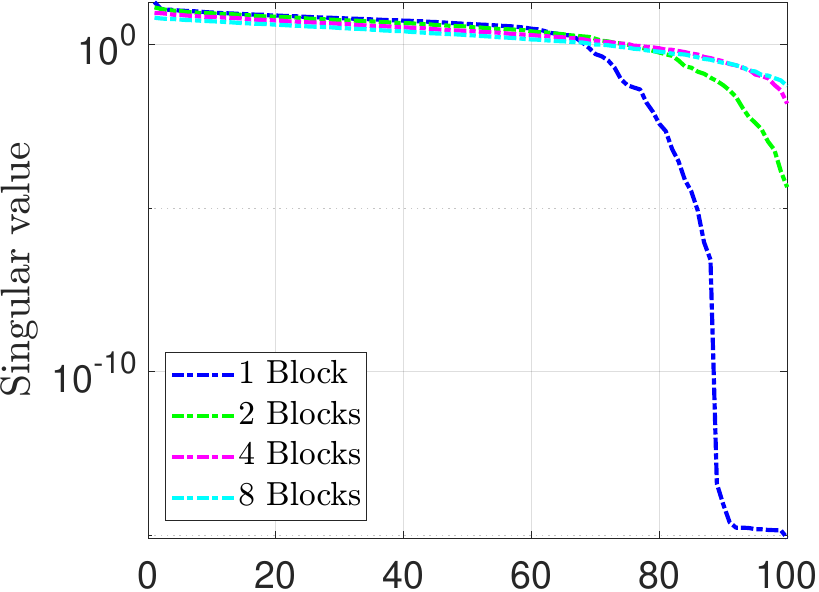}}
  \subfigure[]{
  \label{KP_cond_number}
  \includegraphics[width=.465\columnwidth]{./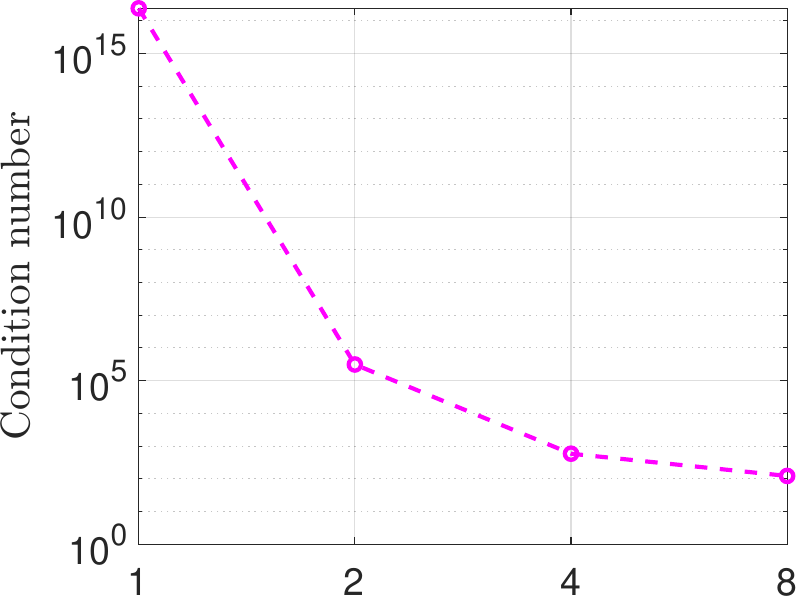}}
  \caption{(a) Sorted singular values for each deployment strategy, with the y-axis presented on a logarithmic scale for clarity. (b) Condition numbers of the measurement matrix for each deployment strategy.}
  \label{KP_cond_number_SVD}
\end{figure}

\begin{figure}
  \centering
  \subfigure[]{
  \label{fig:single board1}
  \includegraphics[width=.465\columnwidth]{./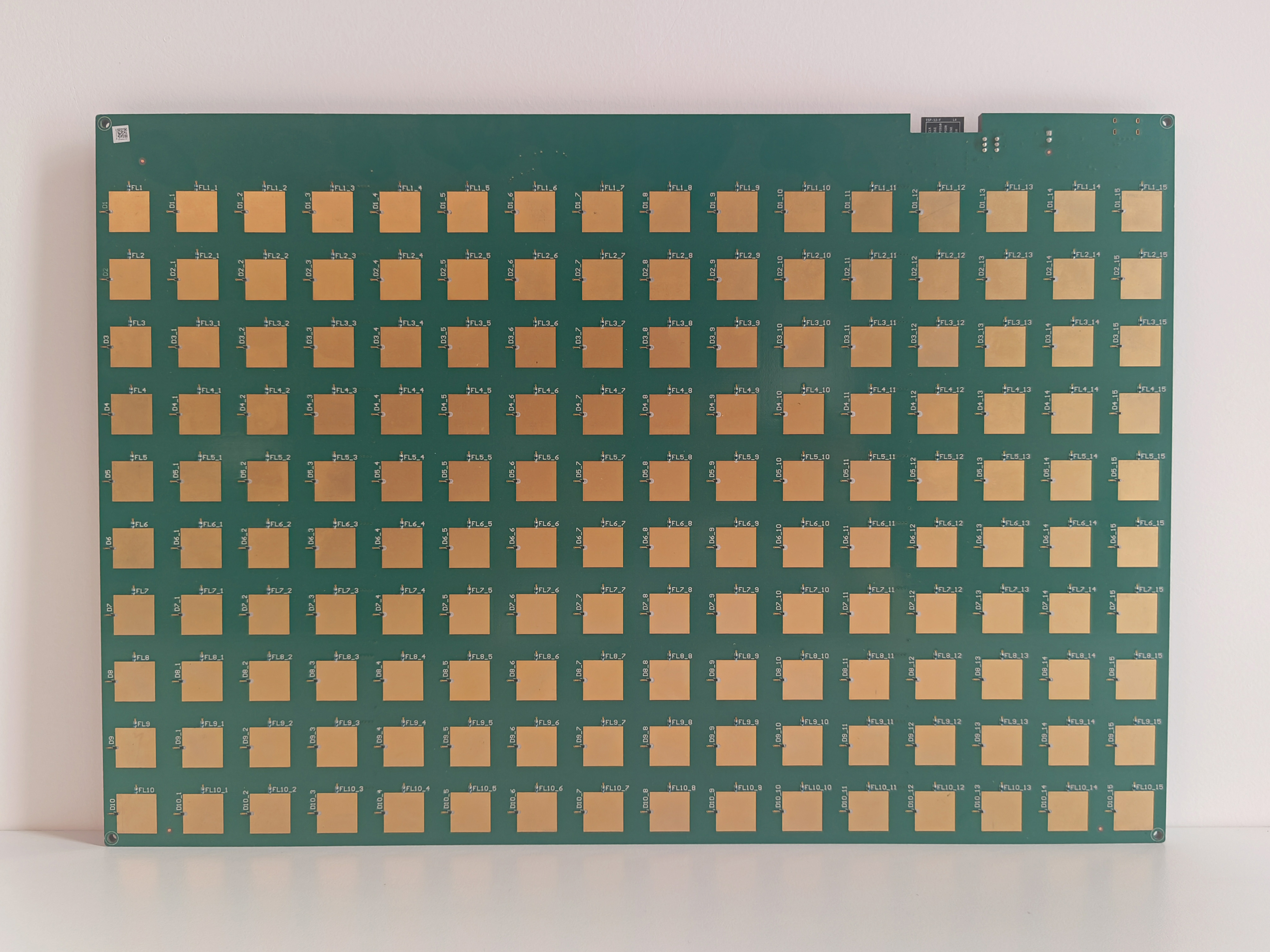}}
  \subfigure[]{
  \label{fig:single board2}
  \includegraphics[width=.465\columnwidth]{./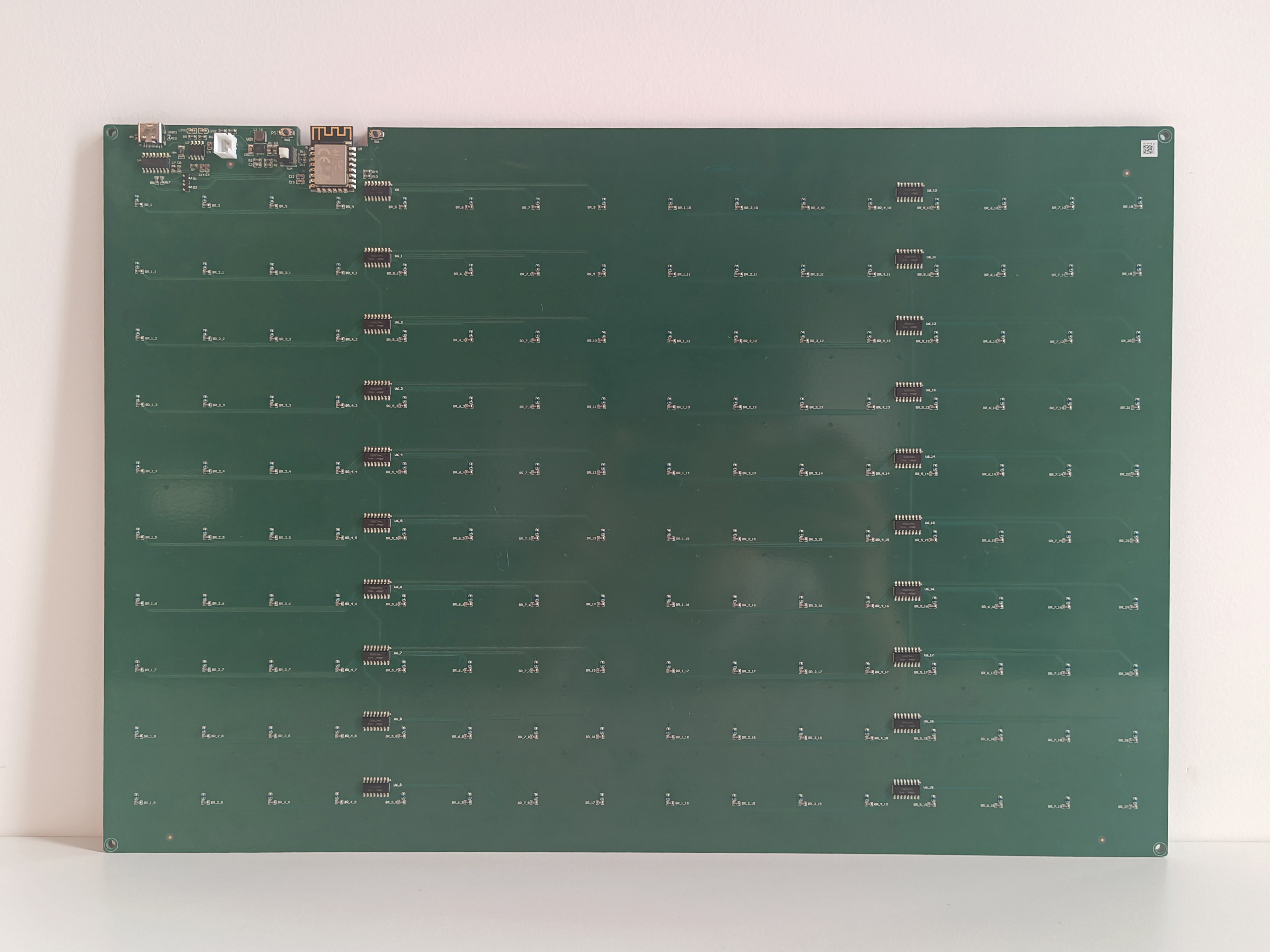}}
  \caption{Photographs of the front (a) and back (b) of the fabricated RIS prototype with 10$\times$ 16 elements. The RIS, operating at a frequency of 5.8 GHz, consist of 10 $\times$ 16 1-bit elements with an element spacing of 0.025 m.}
  \label{fig:single board}
\end{figure}

\begin{figure}
  \centerline{\includegraphics[width=0.95\columnwidth]{./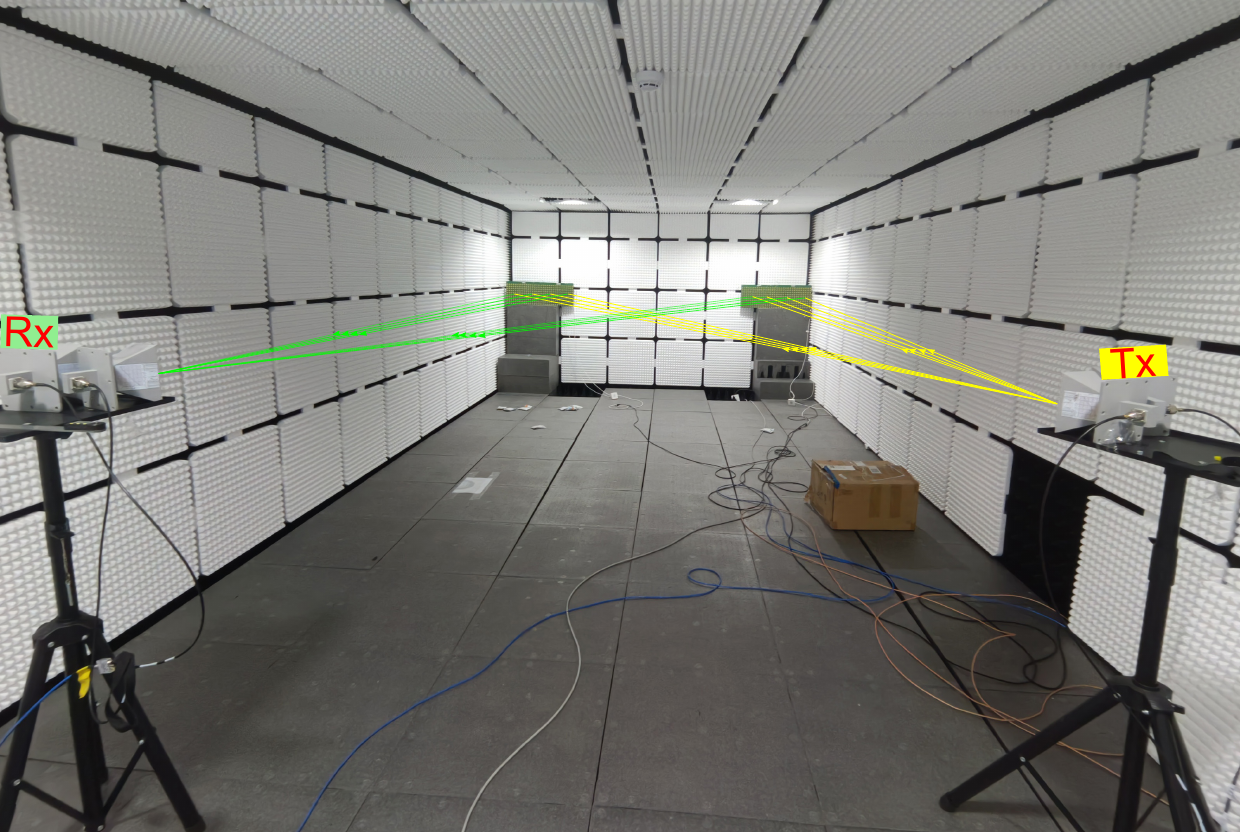}}
  \caption{Experimental setup of two RISs sensing system in a microwave anechoic chamber.}
  \label{fig:experiment_scenario}
\end{figure}

\section{Proof-of-concept Prototype} \label{S:POC}
To validate the proposed sensing scheme, we implement a proof-of-concept prototype using a universal software radio peripheral (USRP) in a microwave anechoic chamber. Two RISs are strategically positioned at one end of the chamber to detect a power source located at the other end, as illustrated in Fig.~\ref{fig:experiment_scenario}. The schematic is depicted as Fig.~\ref{fig:two_RISs Schematic}. The RISs, operating at a frequency of 5.8 GHz, consist of 10 $\times$ 16 1-bit elements with an element spacing of 0.025 m, as shown in Fig.~\ref{fig:single board}. To evaluate the performance of uniform linear RISs within a planar RoI, each column is configured identically. The two RISs, separated by 2.81 m, are employed to detect the source positioned 6 m away. For each RIS, $T=500$ measurements are acquired to construct the sensing matrix $\mathbf{H}$.

\begin{figure}
  \centerline{\includegraphics[width=0.92\columnwidth]{./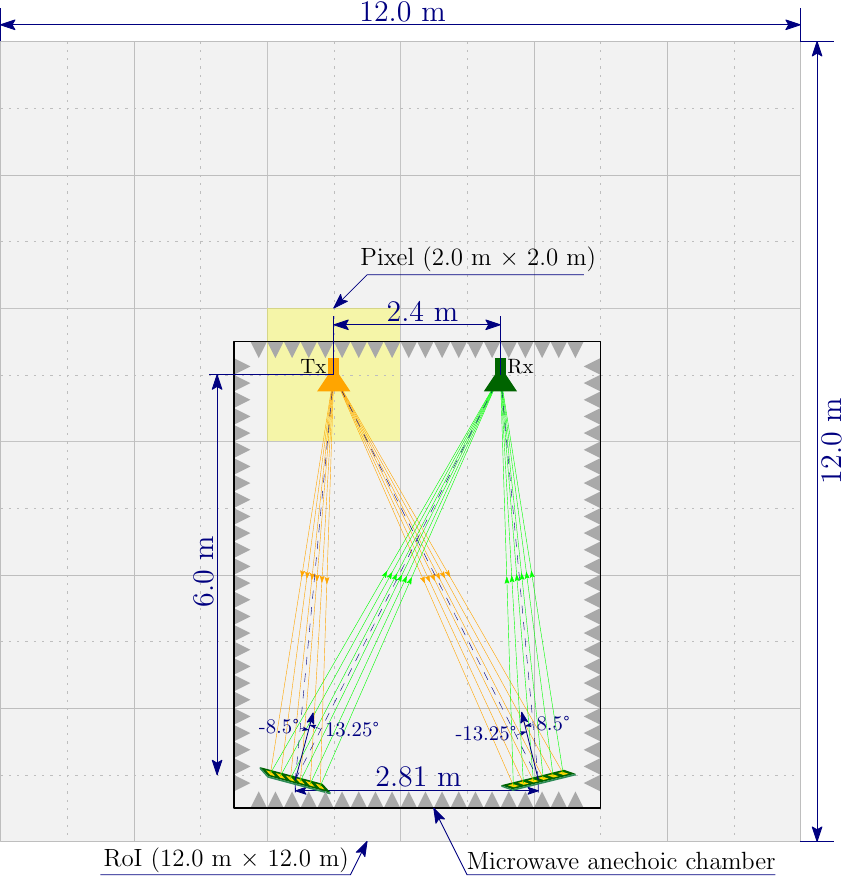}}
  \caption{The geometry and topology of the experiment conducted inside the anechoic chamber. The RoI is a square with dimensions of 12 m $\times$ 12 m, discretized into a grid of 6 $\times$ 6, resulting in each pixel being 2 m $\times$ 2 m.}
  \label{fig:two_RISs Schematic}
\end{figure}

\begin{figure}[!htbp]
  \centerline{\includegraphics[width=0.9\columnwidth]{./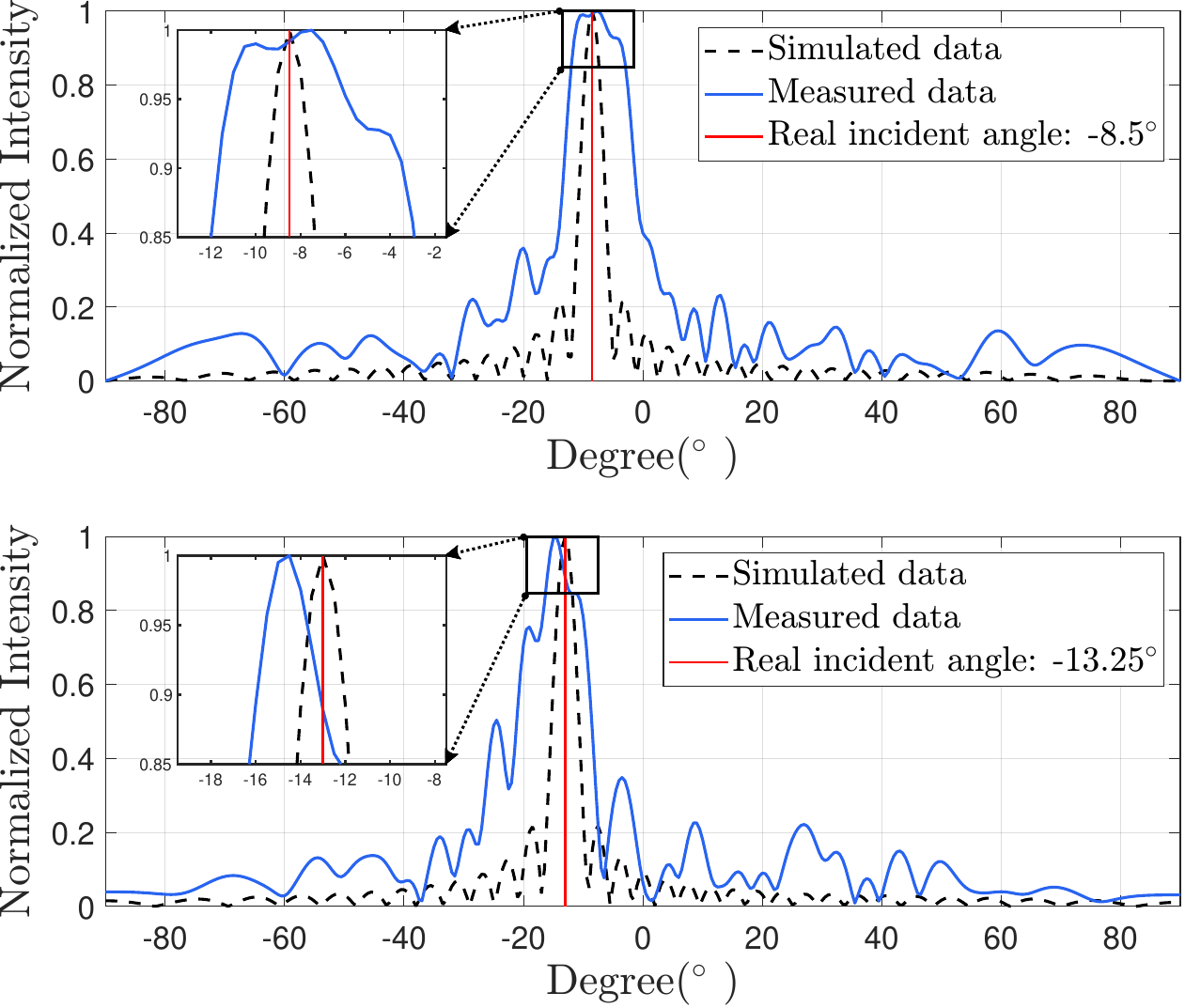}}
  \caption{Results of DoA derived from both simulated and measured data by the magnitude-only reconstruction algorithm. The first row displays the results of DoA from the left RIS. The second row displays the results of DoA from the right RIS. Results demonstrate high accuracy with errors within $2^\circ$.}
  \label{1simulation}
\end{figure}

\begin{figure}[htbp] 
  \centerline{\includegraphics[width=0.85\columnwidth]{./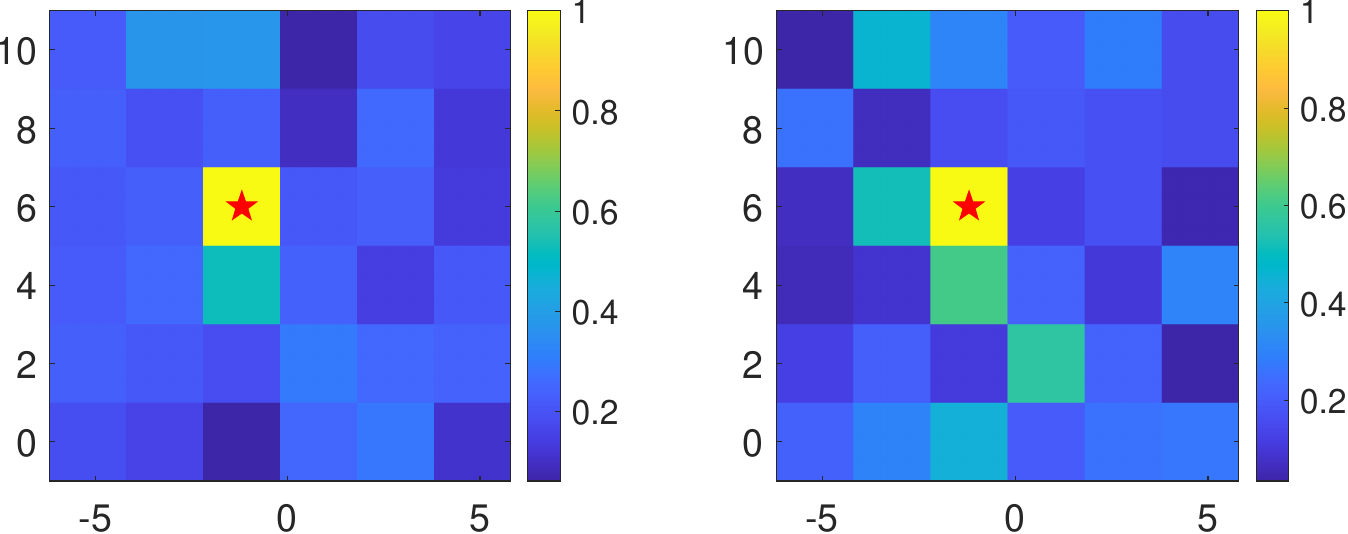}}
  \caption{Recovered results using the magnitude-only reconstruction algorithm in the chamber. The left figure displays the reconstructed scene from simulated data. The right figure shows the reconstructed scene from measured data.} 
  \label{experiment_Reconstructed_real1}
\end{figure}

It's important to note that the validation using USRP has a limitation: while the strength of the carrier wave is easy to measure, accurately capturing the phase is more challenging. Therefore, the backward sensing process must rely on magnitude-only reconstruction algorithms. 
Mathematically, the magnitude-only backward sensing problem is formulated as follows: given $\mathbf{H}$ and $\lvert \mathbf{S} \rvert$, find $\mathbf{E}$ such that
\[
  \lvert \mathbf{S} \rvert \approx \lvert \mathbf{H} \mathbf{E} \rvert .
\]
In contrast to phased measurements described by \eqref{E:LinearMeasurement}, the missing phase makes the problem more complex. The phaseless reconstruction problem is highly nonlinear and significantly more ill-posed than the phased reconstruction problem~\cite{ammari2016phased}. Various approaches have been developed to address the phaseless reconstruction problem, including gradient descent methods, iterative projection methods, and convex optimization-based methods~\cite{fannjiang2020numerics}. We focuse on validating our backward sensing approach rather than delving into the specifics of these algorithms. For our purposes, we employ the reweighted Wirtinger flow (RWF) algorithm as the solver~\cite{yuan2017phase}.

We initiate the validation of the RWF algorithm by estimating the DoA using data independently obtained from each of the two RISs. Fig.~\ref{1simulation} shows the DoA estimations from both RISs. Sepcifically, a single source is localized at $-8.5^\circ$ for the left RIS and $-13.25^\circ$ for the right RIS. In numerical experiments, a distinct peak is observed with DoA estimations of $-8.5^\circ$ and $-13^\circ$ for the left RIS and the right RIS, respectively. In the results estimated using measured data, the DoA estimations are $-7.5^\circ$ for the left RIS and $-14.5^\circ$ for the right RIS, with errors within two degrees. These figures illustrate the remarkably accurate estimation of incident angles, thereby demonstrating the accuracy of our modeling and the effectiveness of the RWF algorithm within this experimental setup.

To further validate the collaborative sensing capabilities of the two RISs, we define the RoI as a square with dimensions of 12 m $\times$ 12 m, discretized into a grid of 6 $\times$ 6, resulting in each pixel being 2 m $\times$ 2 m. This choice of resolution takes into account the limited separation between the two RISs and the relatively small size of each RIS, which comprises only 16 elements. The final results, illustrated in Fig.~\ref{experiment_Reconstructed_real1}, provide a comprehensive overview. The true location of the source is indicated by red pentagrams. The plots clearly demonstrate that the power source can be precisely located using the two RISs. These findings validate the feasibility of the proposed backward sensing approach with multiple RISs.


\begin{figure*}
  \begin{equation} \label{covariance} 
    \begin{aligned}
      {\mathbf{V}}^* \mathbf{V} 
      &= \begin{bmatrix}
        1 & e^{ -j 2 \pi d \sin \theta^{\text{i}} / \lambda }  & \cdots & e^{ -j 2 \pi (N-1) d \sin \theta^{\text{i}} / \lambda }\\
        1 & e^{ -j 2 \pi d \sin ( \theta^{\text{i}}+\Delta ) / \lambda }  & \cdots & e^{ -j 2 \pi (N-1) d \sin ( \theta^{\text{i}}+\Delta ) / \lambda } \\
      \end{bmatrix} 
      \begin{bmatrix}
          1      & 1 \\
          e^{ j 2 \pi d \sin \theta^{\text{i}} / \lambda }    & e^{ j 2 \pi d \sin ( \theta^{\text{i}}+\Delta ) / \lambda }\\
          \vdots & \vdots\\
          e^{ j 2 \pi (N-1) d \sin \theta^{\text{i}}/ \lambda }  & e^{ j 2 \pi (N-1) d \sin ( \theta^{\text{i}}+\Delta ) / \lambda }
        \end{bmatrix}\\
    &= \begin{bmatrix}
      N       & 1  + \cdots + e^{ j 2 \pi (N-1) d  (\sin ( \theta^{\text{i}}+\Delta ) -\sin \theta^{\text{i}})/ \lambda }\\
      1  + \cdots + e^{ j 2 \pi (N-1) d (\sin \theta^{\text{i}}- \sin ( \theta^{\text{i}}+\Delta )) / \lambda } & N
    \end{bmatrix} .
    \end{aligned}
  \end{equation}
  
  \begin{equation} \label{lambda1}
    \begin{aligned}
      \lambda_1,\lambda_2  
      & = N \pm \sqrt{(1 + \dots + e^{ j 2 \pi (N-1) d (\sin \theta^{\text{i}}- \sin ( \theta^{\text{i}}+\Delta )) / \lambda } ) (1+\dots+e^{ j 2 \pi (N-1) d  (\sin ( \theta^{\text{i}}+\Delta ) -\sin \theta^{\text{i}})/ \lambda })}\\
      & = N \pm \left|\frac{1-e^{j 2 \pi N d (\sin \theta^{\text{i}}- \sin ( \theta^{\text{i}}+\Delta ))\lambda}}{1-e^{j 2 \pi d (\sin \theta^{\text{i}}- \sin ( \theta^{\text{i}}+\Delta )) / \lambda}} \right|, \\
      & = N \pm \left| \frac{ \sin ( \pi N d (\sin \theta^{\text{i}} - \sin ( \theta^{\text{i}}+\Delta ) ) / \lambda ) }{ \sin ( \pi d ( \sin \theta^{\text{i}} - \sin ( \theta^{\text{i}}+\Delta ) ) / \lambda ) } \right| .\\
    \end{aligned}
  \end{equation}
  \medskip
  \hrule
\end{figure*}

\section{Conclusion} \label{S:Conclusion}

This paper investigates the capability and practicality of the proposed RIS-centric sensing approach. We present physically accurate and mathematically concise models to characterize the reflection properties of RISs. Using linearized approximations inherent in the far-field region, practical models are derived to describe the forward signal aggregation process via RISs. We demonstrate that DoA estimation of incident waves can be achieved with a single RIS. Additionally, we show that the spatial diversity provided by deploying multiple RISs allows for accurate localization of multiple power sources. A theoretical framework for determining key performance indicators is established through condition number analysis of the sensing operators. Numerical experiments validate our findings. To showcase the practicality of our proposed RIS-centric sensing approach, we develop a proof-of-concept using USRP. To the best of our knowledge, no previous prototypes have been proposed with such configurations.

\appendices

\section{Proof of Lemma~\ref{L:4}}\label{S:Appendix_A3}
\begin{proof} 
For the matrix $\mathbf{V}$ defined by \eqref{E:VandermodeMatrix}, we first calculate ${\mathbf{V}}^* \mathbf{V}$, as shown in 
\eqref{covariance}. It is straightforward to calculate the eigenvalues of ${\mathbf{V}}^* \mathbf{V}$, which are given in \eqref{lambda1}. Finally, we establish the singular values of $\mathbf{V}$. 
\end{proof}

\section{Proof of Lemma~\ref{L:5}}\label{S:Appendix_A4}
\begin{proof} 
The main idea of the proof is to find the series expansion
\begin{equation} \label{PL5:1}
  \frac{\sin N x}{\sin x} = \sum_{k=0}^{\infty} a_{k} x^{2 k} .
\end{equation}
It is easy to verify that
\begin{equation} \label{D:4}
  \begin{aligned}
    \frac{\sin N x}{\sin x}
    &= \frac{e^{i N x}-e^{-i N x}}{e^{i x}-e^{-i x}}
    =\sum_{l=0}^{N-1} e^{i(N-1-2 l) x}\\
    &=\left\{\begin{array}{ll}
      1+2 \sum_{l=1}^{\frac{N - 1}{2}} \cos 2 l x  & N \text { odd}, \\
      2 \sum_{l=1}^{\frac{N}{2}} \cos (2 l -1) x  & N \text { even}.
    \end{array} \right.
  \end{aligned}
\end{equation}
Using the Taylor expansion $\cos x = \sum_{k=0}^{\infty} \frac{(-1)^k}{(2k)!} x^{2 k}$, we find 
\begin{equation}\label{E:TaylorCos1}
  \cos 2 l x = \sum_{k=0}^{\infty} \frac{(-1)^k}{(2k)!} (2 l)^k x^{2 k} ,
\end{equation}
and
\begin{equation}\label{E:TaylorCos2}
  \cos (2 l -1) x = \sum_{k=0}^{\infty} \frac{(-1)^k}{(2k)!} (2 l - 1)^k x^{2 k} .
\end{equation}
Substituting \eqref{E:TaylorCos1} and \eqref{E:TaylorCos2} into \eqref{D:4} yields
\begin{equation}
    a_{k} = \frac{(-1)^{k}}{(2 k) !} \sum_{l=0}^{N-1}(N-1-2 l)^{2 k} .
\end{equation}
It follows that $a_0 = N$ and $a_1 = -\frac{N (N^2-1)}{6}$. Finally, for sufficiently small $x$, we have 
\begin{equation}
  \frac{ \sin (N x) }{\sin x} \approx N - \frac{ N ( N^2-1 ) x^2 }{ 6 }. 
\end{equation}
\end{proof} 

\bibliographystyle{IEEEtran}
\bibliography{Reference}

\end{document}